
\documentclass[journal]{IEEEtran}
\ifCLASSINFOpdf
\else
\fi
\hyphenation{op-tical net-works semi-conduc-tor element-wise}
\usepackage[utf8]{inputenc}
\DeclareUnicodeCharacter{2212}{-}%
\usepackage{amsmath}
\usepackage{amsfonts}
\usepackage{amssymb}
\usepackage{graphicx}
\usepackage[numbers,sort&compress]{natbib}
\usepackage{subcaption}
\usepackage{caption}
\usepackage[dvipsnames]{xcolor}
\usepackage{algorithmicx}
\usepackage{algorithm}
\usepackage{algpseudocode}
\usepackage{comment}
\usepackage{bm}
\usepackage{amsthm}

\usepackage{autonum}

\newcommand{\y}{\boldsymbol{y}}
\newcommand{\x}{\boldsymbol{x}}
\newcommand{\n}{\boldsymbol{n}}
\newcommand{\0}{\boldsymbol{0}}
\newcommand{\muv}{\boldsymbol{\mu}}
\newcommand{\dv}{\boldsymbol{d}}
\newcommand{\Dv}{\boldsymbol{D}}
\newcommand{\Hv}{\boldsymbol{H}}

\newcommand{\Iv}{\boldsymbol{I}}
\newcommand{\Qv}{\boldsymbol{Q}}
\newcommand{\Cv}{\boldsymbol{C}}
\newcommand{\Sigmav}{\boldsymbol{\Sigma}}
\newcommand{\alphav}{\boldsymbol{\alpha}}
\newcommand{\betav}{\boldsymbol{\beta}}
\newcommand{\nuv}{\boldsymbol{\nu}}

\DeclareMathOperator*{\argmin}{arg\,min}
\DeclareMathOperator*{\argmax}{arg\,max}

\newtheorem{theorem}{Theorem}
\newtheorem{lemma}[theorem]{Lemma}
\newtheorem{proposition}[theorem]{Proposition}
\theoremstyle{definition}
\newtheorem{definition}{Definition}


\begin{document}
%
\title{MMSE Approximation For Sparse Coding\\ Algorithms Using Stochastic Resonance}
%
%
%

\author{Dror Simon,
  Jeremias Sulam,
  Yaniv Romano,
  Yue M. Lu
  and Michael Elad

\thanks{D. Simon and M. Elad are with the Department of Computer Science at the Technion, Israel.}
\thanks{J. Sulam is with the Biomedical Engineering Department and the Mathematical Institute of Data Science of Johns Hopkins University, USA.}
\thanks{Y. Romano is with the Statistics Department of Stanford University, USA.}
\thanks{Y. M. Lu is with the John A. Paulson School of Engineering and Applied Sciences, Harvard University, USA.}
}

\maketitle

\begin{abstract}
Sparse coding refers to the pursuit of the sparsest representation of a signal in a typically overcomplete dictionary. From a Bayesian perspective, sparse coding provides a Maximum a Posteriori (MAP) estimate of the unknown vector under a sparse prior. In this work, we suggest enhancing the performance of sparse coding algorithms by a deliberate and controlled contamination of the input with random noise, a phenomenon known as stochastic resonance. The proposed method adds controlled noise to the input and estimates a sparse representation from the perturbed signal. A set of such solutions is then obtained by projecting the original input signal onto the recovered set of supports. We present two variants of the described method, which differ in their final step. The first is a provably convergent approximation to the Minimum Mean Square Error (MMSE) estimator, relying on the generative model and applying a weighted average over the recovered solutions. The second is a relaxed variant of the former that simply applies an empirical mean. We show that both methods provide a computationally efficient approximation to the MMSE estimator, which is typically intractable to compute. We demonstrate our findings empirically and provide a theoretical analysis of our method under several different cases.
\end{abstract}

\begin{IEEEkeywords}
Sparse coding, stochastic resonance, basis pursuit, orthogonal matching pursuit, MMSE estimation
\end{IEEEkeywords}

%
\IEEEpeerreviewmaketitle

\section{Introduction}
\label{sec:introduction}
%
%
%
%


\IEEEPARstart{I}{n} signal processing, often times we have access to a corrupted signal and we wish to estimate its clean version. This process includes a wide variety of problems, such as denoising, where we wish to remove noise from a noisy signal; deblurring where we look to sharpen an image that has been blurred or was taken out of focus; and inpainting in which we fill-in missing data that have been removed from the image.
All the aforementioned tasks, and many others, include a linear degradation operator and a stochastic corruption. The forward model can be described by $\y=\Hv\x+\nuv$, where $\x$ is the clean signal, $\Hv$ is the linear degradation operator, $\nuv$ denotes additive noise, and $\y$ stands for the noisy measurements.

In order to provide a \emph{good estimate} of $\x$, it is useful to incorporate both the statistical properties of the corruption as well as prior knowledge on the signal. In image processing in particular, many such priors have been developed over the years, such as total-variation, self-similarity, sparsity, and many others \cite{rudin1992nonlinear,Dabov,Elad2006}. 

In this work we focus our attention on the sparse model prior, which assumes that the clean signal $\x\in \mathbb{R}^{n}$ is a linear combination of a small number of columns from an overcomplete dictionary $\Dv\in \mathbb{R}^{n\times m}$, where $n < m$, referred to as \emph{atoms}. In this case, we can write $\x=\Dv\alphav$, where the representation vector $\alphav\in\mathbb{R}^m$ is sparse. 
One of the most fundamental problems in this model is termed \emph{sparse coding}: Given $\x$, find the sparsest $\alphav$ such that $\x=\Dv\alphav$.  Formally, this calls for solving
\begin{equation}
\left(P_{0}\right): \quad \hat{\alphav} = \argmin_{\alphav} ~ \left|\left|\alphav\right|\right|_{0} ~ \text{s.t.} ~ \Dv\alphav=\x,
\end{equation}
where $\|\cdot\|_0$ stands for the $l_0$ pseudo-norm that counts the number of non-zero elements in the vector.

Returning to the real measurements setting, and in particular for a denoising task, the degradation is simply given by additive noise $\nuv$, typically assumed Gaussian or with bounded energy $\|\nuv\|_{2} \leq \epsilon$, resulting in $\y=\x+\nuv$. Hence, the above problem is naturally modified to 
\begin{equation}
\left(P_{0}^{\epsilon}\right): \quad \hat{\alphav} = \argmin_{\alphav}~ \left|\left|\alphav\right|\right|_{0} ~\text{s.t.} ~ \left|\left|\Dv\alphav-\y\right|\right|_{2} \leq \epsilon.
\label{eq:P0epsilon}
\end{equation}
The solution of $(P_{0}^{\epsilon})$ can be then used to provide an estimate of $\x$, in the form of $\hat{\x}=\Dv\hat{\alphav}$.

Without further assumptions, $\left(P_{0}^{\epsilon}\right)$ is non-convex and NP-Hard in general \cite{natarajan1995sparse}, as it requires searching through all the possible supports of $\alphav$. Nonetheless, different approximation or pursuit algorithms have been developed in order to manage this task effectively. Some of these include greedy strategies, such as the  \emph{Orthogonal Matching Pursuit} (OMP) \cite{Tropp2007}, or relaxation alternatives, like \emph{Basis-Pursuit} (BP) \cite{chen2001atomic}.

These approximation algorithms have been accompanied by theoretical guarantees for finding a sparse representation $\hat{\alphav}$ that is close to the original representation, e.g. in an $\ell_2$ sense, $\|\hat{\alphav}-\alphav\|_{2}$. Additionally, they often assure the correct recovery of the support \cite{donoho2006stable}. Such results rely on the cardinality of $\alphav$, the range of the non-zero values and properties of the dictionary $\Dv$. These  algorithms succeed not only in cases of noise with bounded energy, but also accurate solutions with high probability in more general settings \cite{Candes2004,tropp2005average}.

From a Bayesian point of view, pursuit algorithms provide an approximation to a Maximum a Posteriori (MAP) estimator \cite{schniter2008fast,Elad2009}. Indeed, the objective seeks the most likely signal under the sparse prior, subject to the noise deviation. Such an estimator does not coincide with the Minimum Mean Squared Error (MMSE) estimate, which is of great interest in many cases. Unfortunately, however, exact MMSE estimation has been shown to be computationally intractable and unfeasible in practice \cite{schniter2008fast,Elad2009}.

In this paper, we provide an efficient way to approximate the MMSE by means of contaminating the input signal with a controlled amount of (further) noise. The proposed method leans on the \emph{Stochastic Resonance} (SR) phenomenon, in which the addition of noise to a weak signal can increase its output Signal-to-Noise Ratio (SNR) in the context of a non-linear transfer function. This field has been broadly developed to improve the performance of sub-optimal detectors \cite{Kay2006}, non-linear parametric estimators \cite{Chen2008} and image processing algorithms \cite{Chen2014}. As we will careful comment later, our method shares similarities with the Supra-threshold SR (SSR) algorithm \cite{stocks2000suprathreshold} while providing a generalization thereof. After providing a provably convergent algorithm, we will explore an alternative relaxation that will enable the deployment of this idea to general pursuit methods and in more practical scenarios. The proposed approach provides a general tool that can improve performance of different sparse coding algorithms, both in synthetic and real data settings, as we demonstrate numerically.  

The rest of the paper is structured as follows. Section \ref{sec:related} explores and comments on related previous work, and Section \ref{sec:bayes} reviews Bayesian estimation under the sparse prior. Then, in Section \ref{sec:prior_algorithm} we present and analyze our provably convergent MMSE approximation algorithm, followed by a practical variation in Section \ref{sec:general_algorithm}. We study the properties of the general algorithm under specific cases in Sections \ref{sec:unitaryPursuit} and \ref{sec:singleAtom}. In Section \ref{sec:noise} we explore the application of different SR noise distributions, before showcasing our proposed approach for image denoising in Section \ref{sec:image}. Finally, we conclude our work in Section \ref{sec:conclusion}.


\section{Related Work}
\label{sec:related}

Providing solutions to inverse problems with minimal MSE has remained a problem of great interest for many years. In the context of sparse modeling in particular, the work in \cite{Elad2009} suggested an MMSE approximation in terms of the \emph{Random-OMP} (RandOMP) algorithm. This approach consists in running a stochastic variant of OMP several times, introducing some randomness in the choice of the supports at every iteration, and finally averaging the results. RandOMP was shown to coincide with the MMSE estimator under a unitary dictionary assumption, or when $\Dv$ is overcomplete and the cardinality of the representations is restricted to one atom. RandOMP also improves the MSE empirically in more general cases, where the MMSE cannot be practically computed. On the other hand, this approach inherits the limitation of OMP, and specifically the gradual increase of the support one element at a time. As a result, this method is impractical in cases where the cardinality of the solution is in the order of hundreds and beyond -- as for many real world signals. 

In \cite{schniter2008fast} a pursuit based on a Bayesian approach was suggested. The \emph{Fast Bayesian Matching Pursuit} (FBMP) is a method that seeks the most probable supports and then approximates their posterior probabilities in order to provide an estimate of the MMSE. FBMP has been developed under a specific sparse prior model and relies on it in order to properly compute its estimate. This limits this approach to cases that follow the assumed signal model and less applicable to real applications where the prior is unknown.

A closely related method to our work is that of Supra-threshold Stochastic Resonance, first described in \cite{stocks2000suprathreshold}. SSR consists in the addition of noise to a signal before it is passed through a set of thresholds, or other analytic non-linearities \cite{chapeau2006noise}. All outputs are then averaged to obtain a final result. Just as in SR, the amount of noise to be added is a parameter that needs careful tuning, and it can be set to maximize some statistical measure (e.g. SNR, Mutual Information, among others). On the other hand, SSR is usually motivated by a fixed physical system (e.g. sensory neurons \cite{stocks2001generic}) where one has only limited control over the input signal. 

As we will explore in the following sections, our work is inspired by SSR and it can be understood as a generalization of it. In particular, we will regard pursuit algorithms as more general non-linear functions, moving beyond simple thresholding rules. Moreover, the output of our algorithm will subtract some of the effect of the added noise, providing a better estimate, asymptotically converging to the MMSE estimator. Finally, the proposed approach will be general, making it possible to consider convex relaxation alternatives, this way making MMSE approximation plausible for large dimensional signals. Before moving to the presentation of the main algorithm, however, we review some general results in Bayesian sparse estimation in the following Section.



\section{Bayesian Estimation Under the Sparse Prior}
\label{sec:bayes}
Let us formulate our model in more details. The matrix $\Dv \in \mathbb{R}^{n\times m}$ is an overcomplete dictionary, while the representation $\alphav\in \mathbb{R}^{m}$ is a sparse vector with either fixed cardinality $\|\alphav\|_0=M$ or with a prior probability $p_i$ for each entry to be non-zero -- we will use both alternatives in this work. The generative model consists of first drawing a support $S$ from a distribution $P_S$. We denote the set of all  probable supports by ${\Omega=\left\{S|P_S(S)>0\right\}}$. The non-zero elements in $\alphav$, denoted as $\alphav_S$, are then drawn from a distribution $P_{\alpha_S}$, which is assumed to be a white Gaussian distribution $\mathcal{N}(\bm{0},\sigma_{\alpha}^2\Iv)$. A signal $\x$ is constructed by a linear combination of atoms $\x=\Dv\alphav$, and the measured samples are given by $\y= \x + \nuv $, where $\nuv \sim \mathcal{N}(\bm{0},\sigma_{\nu}^2\Iv)$ is additive Gaussian noise. Under such a generative model, a Bayesian formulation for estimating $\alphav$ deploys the prior on the representations $\alphav$ in different ways. We now describe different Bayesian estimators that can be formulated in this context, as described in \cite{Turek2011}. 

We begin with the Oracle estimator, which seeks to estimate the clean signal when the support is assumed to be known. This is a simplistic assumption, as retrieving the correct support $S$ of the original sparse representation $\alphav$ is the essence of the combinatorial $\left(P_{0}^{\epsilon}\right)$ problem. In such a case, the MMSE estimator is simply the conditional expectation $\hat{\alphav}_S^{\text{Oracle}} = \mathbb{E}\left[\alphav|\y, S\right]$, given by
\begin{equation}
\hat{\alphav}^{\text{Oracle}}_{S}(\y) = \frac{1}{\sigma_\nu^2}\Qv_S^{-1}\Dv_S^T\y,
\label{eq:oracle}
\end{equation}
where $\Dv_S$ is the sub-dictionary containing only the atoms indexed by $S$, and $\Qv_S$ is given by
\begin{equation}
\Qv_S=\frac{1}{\sigma_\alpha^2}\Iv_{|S|} + \frac{1}{\sigma_\nu^2}\Dv_S^T \Dv_S.
\end{equation}
We refer to this estimator as the \emph{Oracle}, as there is no possible way of knowing the true support beforehand.

Next is the MAP estimator, which searches for the most probable support $\hat{S}$ given the measurements and uses it to estimate the signal\footnote{In fact, this is the MAP of the support. We use it to avoid the probable case where the recovered signal is the $\0$ vector as described in \cite{Turek2011}.}. The relevant posterior probability for this estimation is given by \cite{Turek2011}
\begin{eqnarray} 
P(S|y)=\frac{t_S}{t}, \quad\quad\quad t\triangleq \sum_{S\in\Omega} t_S
\label{eq:mmseGeneral}
\end{eqnarray}
where 
\begin{eqnarray}
t_S \triangleq \frac{1}{\sqrt{\det(\Cv_S)}}\exp{\left\{ -\frac{1}{2}\y^T\Cv_S^{-1}\y \right\}}\prod_{i \in S}p_i \prod_{i \notin S}\left(1-p_i\right), \nonumber
\end{eqnarray}
and $\Cv_S^{-1}=\frac{1}{\sigma_{\nu}^2}\Iv_n - \frac{1}{\sigma_{\nu}^4}\Dv_S\Qv_S^{-1}\Dv_S^T$.

The MAP estimator is obtained by maximizing $P(S|\y)$ with respect to $S$, which (employing Bayes' rule) can be written as
\begin{align}
\hat{S}=\argmax_S P(S|\y) = \argmax_S P(\y|S) P(S).
\end{align}
By replacing the conditional probability and the prior on the support, one can show \cite{Turek2011} that this corresponds to
\begin{align}\label{eq:mapS}
\hat{S}=\argmax_S  \frac{1}{2}\left|\left| \frac{1}{\sigma_v^2}\Qv_S^{-\frac{1}{2}}\Dv_S^T\y \right|\right|_2^2 - \frac{1}{2}\log(\det(\Cv_S)) \\+ \sum_{i \in S} \log(p_i) + \sum_{j \notin S} \log(1-p_j).
\end{align}
In the case where the cardinality of $\alphav$ is constant, $\|\alphav\|_0=M$, and all supports are equally likely, the last two terms above can be omitted. Once the most probable support is recovered, the oracle formula can then be employed to estimate the corresponding coefficients as
\begin{equation}
\hat{\alphav}^\text{MAP}(\y) = \hat{\alphav}^{\text{Oracle}}_{\hat{S}^{\text{MAP}}}(\y).
\end{equation}

The last estimator we discuss here is the MMSE, which is given by the conditional expectation,
\begin{equation}
\hat{\alphav}^{\text{MMSE}}(\y) = \mathbb{E}\left[ \alphav \middle| \y \right] = \sum_{S\in\Omega} P(S|\y) \mathbb{E}\left[ \alphav \middle| \y,S \right].
\end{equation}
This is a weighted sum over all the possible supports. Moreover, each element calls for the oracle estimator over the candidate support, allowing us to write
\begin{equation}
\hat{\alphav}^{\text{MMSE}}(\y) = \sum_{S\in\Omega} P(S|\y) \hat{\alphav}^{\text{Oracle}}_{S}(\y).
\label{eq:mmseSum}
\end{equation}
Because of this massive averaging of all the possible supports, somewhat surprisingly, the MMSE under a sparse prior is actually a dense vector. 

Both estimators, $\hat{\alphav}^{\text{MMSE}}$ and $\hat{\alphav}^{\text{MAP}}$, are NP hard to obtain in general, as they require either a sum over all the possible supports or the computation of all posterior probabilities and selecting the highest one. Either option is prohibitive as the number of possible supports is exponentially large. This is the reason for approximation algorithms or pursuits, which typically attempt to approximate the MAP estimate.
Nonetheless, as noted in \cite{schniter2008fast}, the posterior probabilities $P(S|\y)$ have an exponential nature -- see Equation \eqref{eq:mmseGeneral}, and thus the sum in Equation \eqref{eq:mmseSum} is practically dominated by a much smaller number of terms. Put formally, this suggests that there exists a subset of the supports, $\omega \subset \Omega,~ |\omega| \ll |\Omega|$ such that $P(S_{\omega}|\y)\gg P(S_{\Omega\setminus\omega}|\y)$ where $S_{\omega}\in\omega,S_{\Omega\setminus\omega}\in\Omega\setminus\omega$.
If one could obtain these most significant elements and their proper weights, then an MMSE approximation would be attainable. This is the rationale in earlier work on MMSE approximations \cite{schniter2008fast,Elad2009}, and the motivation behind our proposed approach.

\section{The Proposed Algorithm}
\label{sec:prior_algorithm}

\begin{algorithm}[t]
\caption{Prior-based SR algorithm}\label{algorithm:SR_importance}
\begin{algorithmic}[1]
\Procedure{PriorBased-SR}{$\y, \Dv$, PursuitAlg, $\sigma_n$, $K$}
\State $\bm{\mathcal{S}} \gets \Phi$
\For{k=1...K}
\State $\n_k\gets$ SampleNoise($\sigma_n$)
\State $\tilde{\alphav}_k\gets$ PursuitAlg$(\y+\n_k, \Dv)$
\State $\hat{S}_k \gets $ Support$(\tilde{\alphav}_k)$
\State $\bm{\mathcal{S}} \gets \bm{\mathcal{S}} \cup \{\hat{S}_k\}$
\EndFor
\State $\hat{\alphav} \gets \frac{\sum_{S\in\bm{\mathcal{S}}}P(\y|S)P(S)\hat{\alphav}_{S}^{\text{Oracle}}(\y)}{\sum_{S\in\bm{\mathcal{S}}}P(\y|S)P(S)}$
\State \textbf{return} $\hat{\alphav}$
\EndProcedure
\end{algorithmic}
\end{algorithm}

We now present the proposed approach in Algorithm \ref{algorithm:SR_importance}. This algorithm consists of $K$ iterations, where in each a small amount of noise $\n_k$ is added to the already noisy signal $\y$, and a (greedy or relaxation-based) pursuit algorithm is employed. The final estimation is computed as a weighted average over the obtained supports: For each recovered support, the corresponding oracle estimator is obtained w.r.t. the original measurements $\y$ (i.e., without the influence of $\n_k$) and weighed according to its un-normalized posterior probability.

As we show next, Algorithm \ref{algorithm:SR_importance} asymptotically converges to the MMSE estimator. Before presenting this result, however, we present in the following lemma an alternative expression for the MMSE estimator that will prove useful later on.


\begin{lemma}
\label{lemma:mmseSum}
Let $\Omega=\left\{S\middle|P(S)>0\right\}$ be the set of all the possible supports, then
\begin{equation}
 {\hat{\alphav}^{\text{MMSE}}(\y) = \frac{\sum_{S\in\Omega}P(\y|S)P(S)\hat{\alphav}_{S}^{\text{Oracle}}(\y)}{\sum_{S\in\Omega}P(\y|S)P(S)}}.
\end{equation}
\end{lemma}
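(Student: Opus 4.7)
The plan is to start from the MMSE formula already derived in the paper, namely Equation~\eqref{eq:mmseSum},
\begin{equation*}
\hat{\alphav}^{\text{MMSE}}(\y) = \sum_{S\in\Omega} P(S|\y)\,\hat{\alphav}^{\text{Oracle}}_{S}(\y),
\end{equation*}
and rewrite the posterior probabilities $P(S|\y)$ in a Bayes-rule form that eliminates the (hard to compute) normalizer $P(\y)$, replacing it instead by an explicit sum over $\Omega$.

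Concretely, I would first apply Bayes' theorem,
\begin{equation*}
P(S|\y) = \frac{P(\y|S)\,P(S)}{P(\y)},
\end{equation*}
and substitute this into the sum above. The constant $1/P(\y)$ factors out, giving
\begin{equation*}
\hat{\alphav}^{\text{MMSE}}(\y) = \frac{1}{P(\y)}\sum_{S\in\Omega} P(\y|S)\,P(S)\,\hat{\alphav}^{\text{Oracle}}_{S}(\y).
\end{equation*}
Next, I would use the law of total probability together with the definition of $\Omega$ as the set of all supports with $P(S)>0$; since supports outside $\Omega$ contribute nothing,
\begin{equation*}
P(\y) = \sum_{S\in\Omega} P(\y,S) = \sum_{S\in\Omega} P(\y|S)\,P(S).
\end{equation*}
Plugging this expression for $P(\y)$ back in yields exactly the stated identity.

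There is no real obstacle here: the lemma is essentially a restatement of Bayes' rule applied termwise to the MMSE sum, together with the marginalization of $P(\y)$ over $\Omega$. The only minor thing to justify is that restricting the marginalization to $\Omega$ (rather than to all $2^m$ subsets of $\{1,\dots,m\}$) is valid, which follows immediately from $P(S)=0$ for $S\notin\Omega$. Once that is noted, the derivation is a two-line substitution.
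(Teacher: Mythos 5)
Your proposal is correct and follows essentially the same route as the paper's own proof: apply Bayes' rule to $P(S|\y)$, expand $P(\y)$ via the law of total probability as $\sum_{S\in\Omega}P(\y|S)P(S)$, and substitute into Equation~\eqref{eq:mmseSum}. Your extra remark justifying the restriction of the marginalization to $\Omega$ is a small detail the paper leaves implicit, but the argument is otherwise identical.
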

\begin{proof}
From Bayes' theorem and the law of total probability, we can write
\begin{equation}
    P(S|\y) = \frac{P(\y|S)P(S)}{P(\y)} = \frac{P(\y|S)P(S)}{\sum_{S\in\Omega}P(\y|S)P(S)}.
\end{equation}
Combining this with Equation \eqref{eq:mmseSum}, we have:
\begin{equation}
\hat{\alphav}^{\text{MMSE}}(\y) = \frac{\sum_{S\in\Omega}P(\y|S)P(S)\hat{\alphav}_{S}^{\text{Oracle}}(\y)}{\sum_{S\in\Omega}P(\y|S)P(S)}.
\end{equation}
\end{proof}

The result that we present below is general, in the sense that it is applicable to any pursuit that provides a stable approximation of $\hat{\alphav}$. For this reason, we now formalize this in the following definition, followed by the statement of the main theorem. 

\begin{definition}
A pursuit method $PM$ is a \emph{stable pursuit} w.r.t. the prior $P_S,~P_{\alphav_S}$ and the dictionary $\Dv$ if $\forall~ S\in\Omega$ and a respective $\alphav\in\mathbb{R}^m$ such that ${\text{Support}(\alphav)=S}$, and ${P_{\alphav_S}(\alphav_S)>0}$, then $\exists\epsilon>0$ such that ${\forall \bm{v}\in \left\{\bm{v}\in\mathbb{R}^n \middle| \left\|\bm{v}-\Dv\alphav\right\|_2 < \epsilon \right\}}$ the pursuit method $PM$ recovers the correct support, i.e. ${\text{Support}\left\{PM\left(\bm{v}\right)\right\}= S}$.
\end{definition}

\begin{theorem}
\label{theorem:importance_sampling}
Let $\n_k \sim \mathcal{N}\left(\bm{0},\sigma_n^2\Iv\right)$ be a white Gaussian SR noise, $\sigma_n>0$ and $PM$ a stable pursuit method. Then, as ${K\to\infty}$, Algorithm \ref{algorithm:SR_importance} asymptotically converges to the MMSE estimator with probability $1$.
\end{theorem}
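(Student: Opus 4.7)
My plan is to rewrite the algorithm's output using Lemma \ref{lemma:mmseSum} and then show that the random set $\bm{\mathcal{S}}$ of recovered supports exhausts $\Omega$ almost surely as $K\to\infty$. By Lemma \ref{lemma:mmseSum},
\begin{equation*}
\hat{\alphav}^{\text{MMSE}}(\y) = \frac{\sum_{S\in\Omega}P(\y|S)P(S)\hat{\alphav}_{S}^{\text{Oracle}}(\y)}{\sum_{S\in\Omega}P(\y|S)P(S)},
\end{equation*}
which has exactly the same form as the return value of Algorithm \ref{algorithm:SR_importance} except that the sums are over $\Omega$ rather than over the random subset $\bm{\mathcal{S}}\subseteq\Omega$. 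Since $\Omega$ is finite (it consists of subsets of $\{1,\dots,m\}$, hence $|\Omega|\le 2^m$), it therefore suffices to prove $\Pr(\bm{\mathcal{S}}=\Omega)\to 1$ as $K\to\infty$.

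Next, I would fix an arbitrary $S\in\Omega$ and use the stable-pursuit hypothesis to produce a coefficient vector $\alphav$ with $\text{Support}(\alphav)=S$ and $P_{\alphav_S}(\alphav_S)>0$, together with a radius $\epsilon(S)>0$ such that whenever $\|\bm{v}-\Dv\alphav\|_2<\epsilon(S)$, the pursuit $PM$ returns a vector supported on $S$. In iteration $k$, the input to $PM$ is $\y+\n_k$, so recovery of $S$ is guaranteed once
\begin{equation*}
\n_k\in B_{\epsilon(S)}\bigl(\Dv\alphav-\y\bigr).
\end{equation*}
Because $\n_k\sim\mathcal N(\bm 0,\sigma_n^2\Iv)$ with $\sigma_n>0$ has a strictly positive density on all of $\mathbb R^n$, this ball has strictly positive Gaussian mass, so there exists $q(S)>0$, depending on $\y$, $\alphav$, $\sigma_n$ and $\epsilon(S)$, but not on $k$, with $\Pr(S\in\hat S_k)\ge q(S)$ for every $k$.

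Now I would combine independence of the $\n_k$ with a union bound over the finite set $\Omega$. For each $S$, the probability that $S\notin\bm{\mathcal{S}}$ after $K$ draws is at most $(1-q(S))^K$, which tends to $0$. Setting $q_{\min}=\min_{S\in\Omega}q(S)>0$, the union bound yields
\begin{equation*}
\Pr(\bm{\mathcal{S}}\neq\Omega)\le |\Omega|\,(1-q_{\min})^K\xrightarrow{K\to\infty}0,
\end{equation*}
and in fact, by Borel--Cantelli applied to each $S$ separately, $\bm{\mathcal{S}}=\Omega$ eventually with probability $1$. Once this event occurs the algorithm's output coincides identically with the MMSE expression from Lemma \ref{lemma:mmseSum}, proving the theorem.

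The main obstacle I anticipate is the careful bookkeeping in the step that produces $q(S)>0$: one must invoke the stable-pursuit definition to extract an admissible $\alphav$ and radius $\epsilon(S)$, verify that the relevant Gaussian ball truly has positive measure (trivial given $\sigma_n>0$, but worth stating), and ensure uniformity in $k$ so that a Borel--Cantelli / geometric-tail argument applies. Everything else is structural: the algebraic identification via Lemma \ref{lemma:mmseSum}, finiteness of $\Omega$, and independence across iterations.
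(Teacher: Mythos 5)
Your proposal is correct and follows essentially the same route as the paper's proof: both reduce the claim via Lemma \ref{lemma:mmseSum} to showing that $\bm{\mathcal{S}}$ eventually covers $\Omega$, and both obtain this from the stable-pursuit definition, which yields an $\epsilon$-ball of strictly positive Gaussian mass for each $S\in\Omega$. The only difference is presentational: the paper argues by contradiction and leaves the ``with probability $1$'' step implicit, whereas you make it explicit with the geometric tail bound, the union bound over the finite $\Omega$, and Borel--Cantelli, which is a slightly more rigorous rendering of the same idea.
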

 

\begin{proof}
Assume by contradiction that Algorithm \ref{algorithm:SR_importance} does not converge to the MMSE. From Lemma \ref{lemma:mmseSum} this means that ${\Omega \setminus \bm{\mathcal{S}}\neq\Phi}$, where $\bm{\mathcal{S}}$ are the gathered supports by Algorithm \ref{algorithm:SR_importance}. Let $S_i$ be a support such that $S_i\in \Omega \setminus \bm{\mathcal{S}}$, and let $\alphav_i\in\mathbb{R}^m$ be such that $\text{Support}\{\alphav_i\}=S_i$ and ${P_{\alphav_S}(\alphav_{S_i})>0}$. Since the pursuit method is stable, $\exists \epsilon>0$ such that ${\forall\bm{v}\in\left\{\bm{v}\in\mathbb{R}^n \middle| \|\bm{v}-\Dv\alphav_i\|_2<\epsilon\right\}}$ and ${\text{Support}\left\{PM\left(\bm{v}\right)\right\}=S_i}$. 

In each iteration, the algorithm sparse codes $\y + \n_k$ and since $\n_k$ is Gaussian ${P(\|\y + \n_k - \Dv\alphav\|_2<\epsilon) > 0}$ for any $\alphav\in\mathbb{R}^m$. Hence, as $K\to\infty$, $\exists k_i$ such that ${\|\y + \n_{k_i} - \Dv\alphav_i\|_2<\epsilon}$ and from the stability of the pursuit $\text{Support}\left\{PM\left(\y + \n_{k_i}\right)\right\}=S_i$. Therefore, at the $k_i$-th iteration, the support $S_i$ is added to the accumulated set $\bm{\mathcal{S}}$, contradicting the false assumption.
\end{proof}

A few remarks are in place. First, if $K\geq |\Omega|$, using Algorithm \ref{algorithm:SR_importance} is clearly ineffective since one may simply compute the MMSE from \eqref{eq:mmseSum}. Nevertheless, fast convergence occurs when $\bm{\mathcal{S}}$ contains the most likely supports, since their weight is much larger than the weight of the other elements \cite{schniter2008fast}. Using the MAP estimator -- or its approximation in terms of pursuit algorithms -- promotes highly likely supports more often than less likely ones. The SR idea provides a way of accumulating a set of highly probable supports by perturbing the measurements before running the pursuit.

Second, the result above is somewhat limited as it does not inform about \emph{how fast} $\hat{\alphav}$ converges to $\hat{\alphav}^{\text{MMSE}}$. Such an analysis must depend on the energy of the added noise. Indeed, when $\sigma_n$ is too large, the SR measurements $\y + \n_k$ significantly deviate from $\y$, reducing the chances to retrieve probable supports. On the other hand, when the added noise is weak, the signal $\y + \n_k$ hardly varies, reducing the chances to cover the set of supports quickly. The analysis of this convergence rate is challenging, and we defer it to future work. As we will extensively corroborate numerically, however, a relatively low number of iterations $K$ suffices to provide a good approximation to the MMSE in practice.

To empirically examine the performance of Algorithm \ref{algorithm:SR_importance}, we performed the following experiments. We drew a random dictionary $\Dv\in\mathbb{R}^{50\times 100}$ and normalized its columns. Then, we generated sparse vectors containing a single non-zero element at a random location, where its value was sampled from the normal distribution $\mathcal{N}(0,1)$. Signals were then created by multiplying the sparse vectors with the dictionary $\Dv$. Finally, we added a Gaussian noise $\nuv\sim\mathcal{N}(\bm{0},\sigma_{\nu}^2\Iv)$, $\sigma_{\nu}=0.2$ to the signals resulting with noisy measurements. We then used Algorithm \ref{algorithm:SR_importance} to estimate the clean signals and compared its MSE to the MAP and the MMSE estimators for a varying number of iterations $K$ and standard deviation $\sigma_{n}$. Note that in this case, the OMP algorithm is also the MAP, and therefore we used it as our pursuit method. The results averaged over $10,000$ realizations can be seen in Figure \ref{fig:importance1}. Note that in this experiment, the number of possible supports is $|\Omega|=100$. When $K=100$ and $\sigma_n\approx0.5$, the MSE of Algorithm \ref{algorithm:SR_importance} almost reaches that of the MMSE estimator, suggesting that only a few improbable supports are missing from the accumulated set $\bm{\mathcal{S}}$. That said, even when $K$ is much smaller than $100$, for a reasonable amount of SR noise $\sigma_{n}\approx\sigma_{\nu}$, the MSE of Algorithm \ref{algorithm:SR_importance} is close to that of the MMSE estimator.

To further demonstrate the efficiency of this technique, we repeat the same protocol, only this time the number of non-zero elements in the sparse vector is 3, increasing the number of possible supports to $|\Omega|=\binom{100}{3} = 161,700$, all apriori equally likely. The results can be seen in Figure \ref{fig:importance2}. Note that while the MAP and the MMSE estimators require iterating through all the possible supports, Algorithm \ref{algorithm:SR_importance} efficiently retrieve the most significant ones, leading to superior denoising results over the standard OMP and the MAP estimator.

These results show that not only does the algorithm asymptotically converge to the MMSE, but also a significant improvement can be achieved over the MAP estimator (or its approximation) even with a relatively small number of iterations. A limitation of this approach, however, is that this method requires full knowledge of the generative model and its parameters, similar to the FBMP algorithm, thus limiting its use in real settings. In the following sections we suggest a practical variation of the presented algorithm that can be used in more general cases.

\begin{figure}[t]
	\centering
	\includegraphics{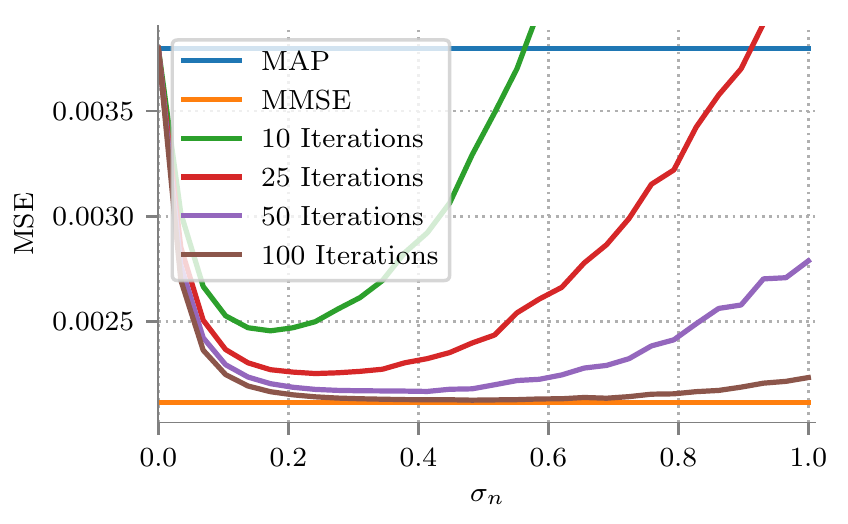}
	\caption{PriorBased-SR for various $K$ and $\sigma_n$ values. Sparse vector cardinality $\|\alphav\|_0=1$.}
	\label{fig:importance1}
\end{figure}

\begin{figure}[t]
	\centering
	\includegraphics{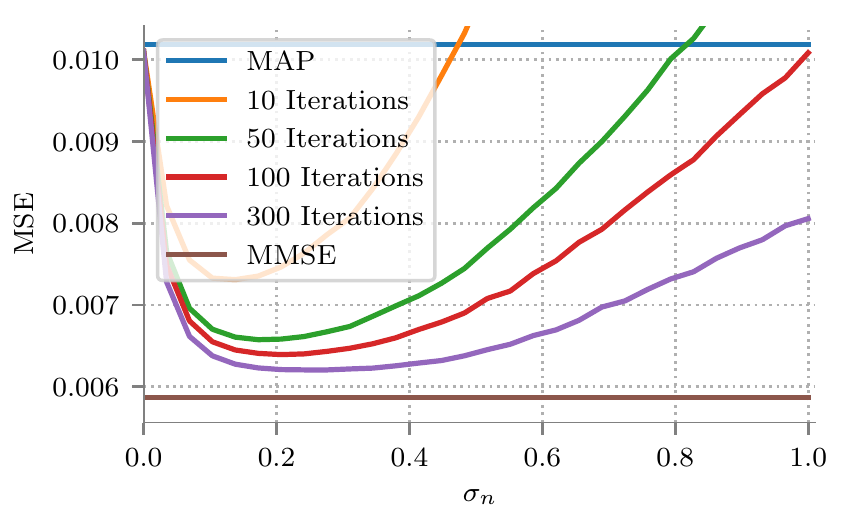}
	\caption{PriorBased-SR for various $K$ and $\sigma_n$ values. Sparse vector cardinality $\|\alphav\|_0=3$.}
	\label{fig:importance2}
\end{figure}


\section{A Practical Variant}
\label{sec:general_algorithm}

\begin{algorithm}[t]
\caption{General SR algorithm}\label{algorithm:SR}
\begin{algorithmic}[1]
\Procedure{General-SR}{$\y, \Dv$, PursuitAlg, $\sigma_n$, $K$}
\For{k=1...K}
\State $\n_k\gets$ SampleNoise($\sigma_n$)
\State $\tilde{\alphav}_k\gets$ PursuitAlg$(\y+\n_k, \Dv)$
\State $\hat{S}_k \gets $ Support$(\tilde{\alphav}_k)$
\If {$P(\alphav|\y,S)$ is known}
\State $\hat{\alphav}_k\gets \hat{\alphav}_{\hat{S}_k}^{\text{Oracle}}(\y)$
\Else
\State $\hat{\alphav}_k\gets \left(\Dv_S^T\Dv_S\right)^{-1}\Dv_S^T\y$
\EndIf
\EndFor
\State $\hat{\alphav}=\frac{1}{K}\sum_{k=1}^{K}\hat{\alphav}_k$
\State \textbf{return} $\hat{\alphav}$
\EndProcedure
\end{algorithmic}
\end{algorithm}

We present the practical variant of Algorithm \ref{algorithm:SR_importance} in Algorithm \ref{algorithm:SR}. Note that we split the algorithm into two cases. In the first we assume that even though we have no knowledge regarding the probability mass function of the support $P(S)$, we do know the probability density function of the non-zero elements given their indices and the measurements, making the oracle estimator obtainable. The second case only assumes knowledge regarding the dictionary $\Dv$, replacing the oracle with a simple Least Squares (LS) operation. 

Computing the MMSE estimator in Equation \eqref{eq:mmseSum} requires a complete knowledge regarding the generative model. Therefore, in cases where the prior is only partially known, the MMSE estimator cannot be obtained, and achieving its MSE performance is generally not feasible. Nevertheless, as we empirically show here and in the following sections, Algorithm \ref{algorithm:SR} succeeds to effectively approximate the MMSE estimator.

Before we go through the analytic arguments and empirical evidence provided in the coming sections, we present some intuition behind the proposed method. Similar to the previous method, the SR noise added will introduce a small perturbation in the signal $\y + \n_k$ in each iteration. Therefore, in each iteration the pursuit will extract supports that are likely to fit the original signal $\x$. The final estimator is an arithmetic mean of all the recovered supports, meaning that the supports that have higher posterior probability will be retrieved more often, making their weight greater than other, less probable supports. If $K$ is large enough and the occurrence of each support resembles its un-normalized posterior probability, then the averaged result coincides with the MMSE. 

We propose the following experiments in order to demonstrate Algorithm \ref{algorithm:SR}'s performance. As in the previous section, we use a normalized random Gaussian dictionary, this time of size $25\times 50$, and generate random sparse vectors with 3 non-zero elements and Gaussian coefficients. We multiply the sparse vectors by the dictionary and add a Gaussian noise to the signals. To obtain clean estimates we use Algorithm \ref{algorithm:SR}, once with BP and once with OMP. Since we assume no knowledge regarding the prior probability of the support of the sparse vector, we use the bounded noise formulation of the pursuit algorithms, i.e.
\begin{align}
\text{(OMP)} \quad &\min_{\alphav} \|\alphav\|_0 \quad \text{s.t.} \quad \|\y - \Dv\alphav\|_2 \leq \epsilon,
\\
\text{(BP)} \quad\quad &\min_{\alphav} \|\alphav\|_1 \quad \text{s.t.} \quad \|\y - \Dv\alphav\|_2 \leq \epsilon.
\end{align}
where $\epsilon$ is chosen to be optimal. We repeat the experiment twice. Once we assume we know the distribution of $\alphav_S|S,\y$ allowing us to use the oracle in Equation \eqref{eq:oracle}, and once using the plain LS variant. We compare the results to those obtained by Algorithm \ref{algorithm:SR_importance} with OMP used as a pursuit, as well as to the MMSE and MAP estimators. 

In a second experiment we further examine the effectiveness of our method compared to standard pursuit algorithms for a varying number non-zero elements. For this experiment, we increased the dimensions of the dictionary to $50 \times 100$ to allow for a large number of non-zero elements in $\alphav$, while keeping it sparse. In all the described experiments we use 300 iterations for both algorithms and average over $10,000$ realizations. The results can be seen in Figure \ref{fig:bp_omp} and \ref{fig:cardinality} respectively.

Examining the results obtained in Figure \ref{fig:cardinality}, our LS variant improves over standard pursuit algorithms for various cardinality values. Furthermore, Figure \ref{fig:bp_omp} demonstrates that while algorithm \ref{algorithm:SR} in its LS and oracle variant improved both BP's and OMP's MSE, the two perform differently. At first, the oracle seems slightly favorable and indeed achieves better results for the optimal $\sigma_n$. Surprisingly however, when too much noise is added their difference diminishes. Comparing Algorithm \ref{algorithm:SR_importance} to Algorithm \ref{algorithm:SR}, the former is much more robust to the standard deviation of the SR noise. The source for this difference is the averaging operator. While Algorithm \ref{algorithm:SR_importance} uses the prior in order to weigh the solutions correctly, in Algorithm \ref{algorithm:SR} each support is weighed by its probability to be chosen in the SR process. In general, the weights given by the SR process do not match the MMSE weights in Equation \eqref{eq:mmseSum}. 

In the following sections we introduce additional assumptions in order to expand the theoretical and empirical analysis presented. First we analyze the case in which the dictionary consists of a unitary matrix and then we analyze the case of a general normalized dictionary, while restricting the cardinality of the sparse vector to be $1$.

\begin{figure}[t]
    \includegraphics{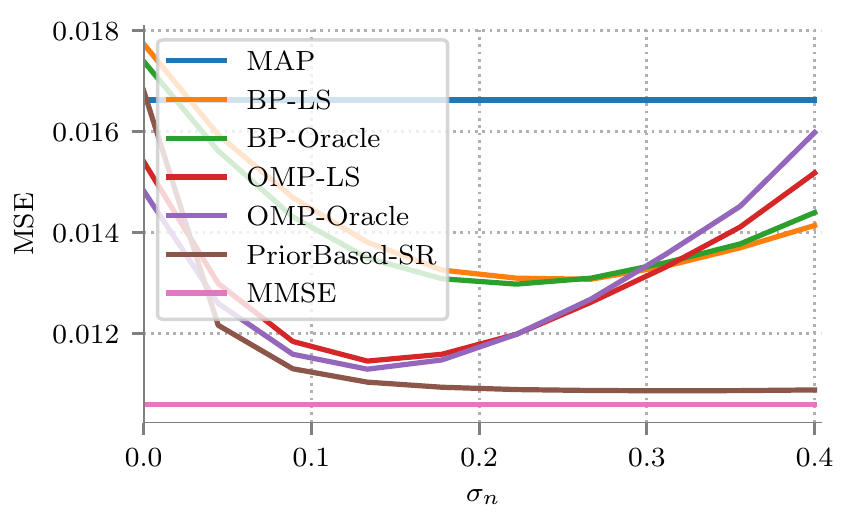}
    \caption{MSE comparison between Algorithm \ref{algorithm:SR_importance} with OMP, and Algorithm \ref{algorithm:SR} with both OMP and BP in its oracle and LS forms.}
    \label{fig:bp_omp}
\end{figure}

\begin{figure}[t]
    \includegraphics{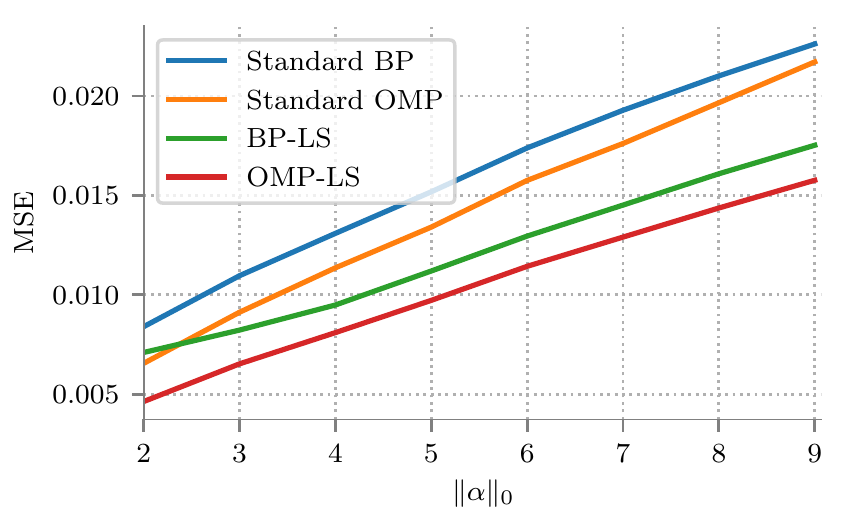}
    \caption{Algorithm \ref{algorithm:SR} with OMP and BP in its LS form compared to standard OMP and BP pursuits vs. the cardinality of the sparse representation vector. $\sigma_n$ is optimal and obtained empirically.}
    \label{fig:cardinality}
\end{figure}


\section{General SR in the Unitary Case}
\label{sec:unitaryPursuit}
\subsection{The Unitary Sparse Estimators}
When the dictionary is a unitary $n\times n$ matrix, we can simplify the expressions associated with the oracle, MAP and MMSE estimators as suggested in \cite{Turek2011, protter2010closed}.
Given a support $S$, the oracle estimator is a constant shrinkage applied on the projected measurements $\betav_S=\Dv_S^T\y$
\begin{equation}
\hat{\alphav}^{\text{Oracle}}_{S}(\y) = c^2\betav_S,
\end{equation}
where $c^{2}=\sigma_{\alpha}^{2}/(\sigma_{\alpha}^{2}+\sigma_{\nu}^{2})$.

The MAP estimator is reduced to the element-wise hard thresholding operator applied on the projected measurements $\betav=\Dv^T\y$, given by
\begin{equation}
\hat{\alpha}_{MAP}\left(\beta\right)=\mathcal{H}_{\lambda_{MAP}}\left(\beta\right)=
\begin{cases}
	c^{2}\beta & \text{if } \left|\beta\right|\geq\lambda_{MAP}, \\
	0 & \text{otherwise}
\end{cases},
\end{equation}
where $\lambda_{MAP} \triangleq \frac{\sqrt{2}\sigma_{\nu}}{c}\sqrt{\log{\left( \frac{1-p_{i}}{p_{i}\sqrt{1-c^{2}}}\right) }}$, and $\alpha$ and $\beta$ are the elements of the vectors $\alphav$ and $\betav$.

The MMSE estimator in the unitary case is a simple elementwise shrinkage operator of the following form:
\begin{equation}
\hat{\alpha}_{MMSE}=\psi(\beta)=\frac{\exp\left({\frac{c^{2}}{2\sigma_{\nu}^{2}}\beta^{2}}\right)\frac{P_{i}}{1-P_{i}}\sqrt{1-c^{2}}}{1 + \exp{\left(\frac{c^{2}}{2\sigma_{\nu}^{2}}\beta^{2}\right)}\frac{P_{i}}{1-P_{i}}\sqrt{1-c^{2}}}c^{2}\beta.
\end{equation}
Note that this shrinkage operator does not result in a sparse vector, just as in the general case.
The above scalar operators are extended to act on vectors in an entry-wise manner.


\subsection{The Unitary SR Estimator}
We now analyze the estimator suggested in Algorithm \ref{algorithm:SR}, under the unitary dictionary assumption.
\begin{proposition}
Let $\Dv$ be a unitary matrix and denote by $Q(\cdot)$ the tail probability of the standard normal distribution\footnote{$Q(\cdot)$ is given by $Q(x)=\frac{1}{\sqrt{2\pi}}\int_{x}^{\infty}e^{-\frac{u^2}{2}} du$.}. Suppose that we use Algorithm \ref{algorithm:SR} with the MAP estimator $\mathcal{H}_{\lambda}$ and white Gaussian SR noise $\n_k\sim\mathcal{N}(\bm{0},\sigma_{n}^2\bm{I})$ as a pursuit. Then, asymptotically, ${\hat{\alpha} = \left[Q\left(\frac{\lambda+\beta}{\sigma_n}\right)+Q\left(\frac{\lambda-\beta}{\sigma_n}\right)\right]}c^{2}\beta$.
\end{proposition}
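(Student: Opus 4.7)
The plan is to exploit the unitary structure of $\Dv$ in order to decouple the algorithm's action into $n$ independent scalar problems, then invoke the strong law of large numbers. Since $\Dv$ is unitary, applying $\Dv^T$ is an isometry and $\tilde{\n}_k \triangleq \Dv^T \n_k$ is again white Gaussian with covariance $\sigma_n^2 \Iv$. Writing $\betav = \Dv^T \y$, each iteration of Algorithm \ref{algorithm:SR} with MAP pursuit $\mathcal{H}_\lambda$ selects into $\hat{S}_k$ exactly those indices $i$ for which $|\beta_i + \tilde{n}_{k,i}| \geq \lambda$, because the hard-thresholding rule acts entrywise on $\Dv^T(\y+\n_k)$. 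Moreover, as recalled at the start of Section \ref{sec:unitaryPursuit}, the oracle estimator with the original measurements $\y$ restricted to $\hat{S}_k$ is simply $c^2 \betav_{\hat{S}_k}$. Hence the $i$-th coordinate of $\hat{\alphav}_k$ is $c^2 \beta_i$ times the indicator of the event $\{|\beta_i + \tilde{n}_{k,i}| \geq \lambda\}$, and different coordinates may be analyzed independently.

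Next, I would apply the strong law of large numbers to the empirical mean in line 12 of Algorithm \ref{algorithm:SR}. Since $\{\tilde{n}_{k,i}\}_{k=1}^K$ are i.i.d.\ $\mathcal{N}(0,\sigma_n^2)$, the indicators $\mathbb{1}\{|\beta_i + \tilde{n}_{k,i}| \geq \lambda\}$ are i.i.d.\ Bernoulli random variables, and their empirical average converges almost surely to
\begin{equation}
p_i \triangleq \Pr\!\left(|\beta_i + \tilde{n}_{k,i}| \geq \lambda\right).
\end{equation}
Therefore $\hat{\alpha}_i \to c^2 \beta_i \, p_i$ almost surely as $K\to\infty$.

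It remains to evaluate $p_i$ in closed form. Since $\beta_i + \tilde{n}_{k,i} \sim \mathcal{N}(\beta_i, \sigma_n^2)$, standardizing and using the symmetry of the Gaussian gives
\begin{equation}
p_i = \Pr\!\left(\tilde{n}_{k,i} \geq \lambda - \beta_i\right) + \Pr\!\left(\tilde{n}_{k,i} \leq -\lambda - \beta_i\right) = Q\!\left(\tfrac{\lambda-\beta_i}{\sigma_n}\right) + Q\!\left(\tfrac{\lambda+\beta_i}{\sigma_n}\right),
\end{equation}
which upon substitution yields the claimed formula. There is no serious obstacle in this argument; the only item requiring care is bookkeeping around the two roles of noise, namely that the oracle step uses the original $\y$ (so the coefficient factor is deterministic) while the support selection uses $\y + \n_k$ (so only the indicator is stochastic). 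Once this distinction is made, the result reduces to a one-line Gaussian tail computation.
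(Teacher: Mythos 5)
Your proposal is correct and follows essentially the same route as the paper: both exploit unitarity so that $\tilde{\n}_k=\Dv^T\n_k$ remains white Gaussian, reduce each coordinate to $c^2\beta_i$ times the indicator of $\{|\beta_i+\tilde{n}_{k,i}|\geq\lambda\}$ (the paper's ``subtractive hard thresholding'' operator), and identify the limit of the empirical mean with the expectation, which is the Gaussian tail computation carried out in Appendix~\ref{appendix:unitary}. Your explicit invocation of the strong law of large numbers and the remark that the oracle coefficient is deterministic while only the support indicator is stochastic are welcome clarifications of steps the paper leaves implicit.
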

\begin{proof}
When using the MAP estimator in Algorithm \ref{algorithm:SR}, the thresholding operator is only used to recover the support itself. Once the support is extracted, the final estimator computes the oralce estimator w.r.t. the obtained supports before applying an empirical mean. This process can be equivalently described by the following elementwise \emph{subtractive hard thresholding} operator\footnote{Notice that the written equation operates on the the vectors elementwise.} $\mathcal{H}_{\lambda}^{-}(\cdot)$:
\begin{align}
\hat{\alpha}_{k}\left(\beta,\tilde{n}_{k}\right)=\mathcal{H}^{-}\left(\beta,\tilde{n}_{k}\right) =
\begin{cases}
	c^{2}\beta & \text{if } \left|\beta+\tilde{n}_{k}\right|\geq\lambda_{MAP}, \\
	0 & \text{otherwise}
\end{cases},
\end{align}
where $\tilde{\n}_k = \Dv^T\n_k$. Clearly, since $\Dv$ is unitary, ${\tilde{\n}_k\sim\mathcal{N}\left(\bm{0},\sigma_n^2 \Iv\right)}$. The final estimator is then achieved by an empirical mean,
\begin{equation}
\hat{\alphav}=\frac{1}{K}\sum_{k=1}^{K}\hat{\alphav}_{k}=\frac{1}{K}\sum_{k=1}^{K}\mathcal{H}^{-}_{\lambda}\left(\betav+\tilde{\n}_k\right).
\end{equation}
As $K\to\infty$, the described process asymptotically converges to the expectation
\begin{align}
\mathbb{E}_{\tilde{n}}\left[\mathcal{H}_{\lambda}^{-}\left(\beta,\tilde{n}\right)\right]=&\int_{-\infty}^{\infty}\mathcal{H}_{\lambda}^{-}\left(\beta,\tilde{n}\right)p\left(\tilde{n}\right)d\tilde{n} \\
~
=&\left[Q\left(\frac{\lambda+\beta}{\sigma_n}\right)+Q\left(\frac{\lambda-\beta}{\sigma_n}\right)\right]c^{2}\beta.
\label{eq:qIntegral}
\end{align}
The full derivation can be found in Appendix \ref{appendix:unitary}.
\end{proof}

Unfortunately, analytically bounding the MSE between \eqref{eq:qIntegral} and the MMSE estimator proves to be challenging. However, as we will see shortly, Equation \eqref{eq:qIntegral} numerically approximates the MMSE very accurately.
Note that there are two parameters yet to be set: $\sigma_n$ and $\lambda$. The former tunes the magnitude of the added noise, while the latter controls the value of the thresholding operation. The original MAP threshold $\lambda_{\text{MAP}}$ might be sub-optimal due to the addition of SR noise and therefore, we leave $\lambda$ as a free parameter. We will suggest a method to set these parameters later in this section.


\subsection{Unitary SR Estimation Results}
In order to demonstrate the similarity of the proposed estimator to the MMSE estimator, we compare their shrinkage curves in Figure \ref{fig:shrink}. One can see that, while the curves do not overlap completely, for the right choice of parameters ($\lambda$ and $\sigma_{n}$), the curves are indeed quite close to each other. In terms of MSE, in Figure \ref{fig:unitaryMse} we compare the performance of the general SR method and the MMSE as a function of $\sigma_n$ (with $\lambda$ fixed at the optimal value). Indeed, for the optimal parameters, the two are almost identical. In Appendix \ref{appendix:sure}, the performance of the general SR estimator is demonstrated as a function of both $\lambda$ and $\sigma_n$.

We now discuss how to set the parameters in order to reach these optimal results.

\begin{figure}[t]
    \centering
    \begin{subfigure}{0.5\textwidth}
        \centering
        \includegraphics{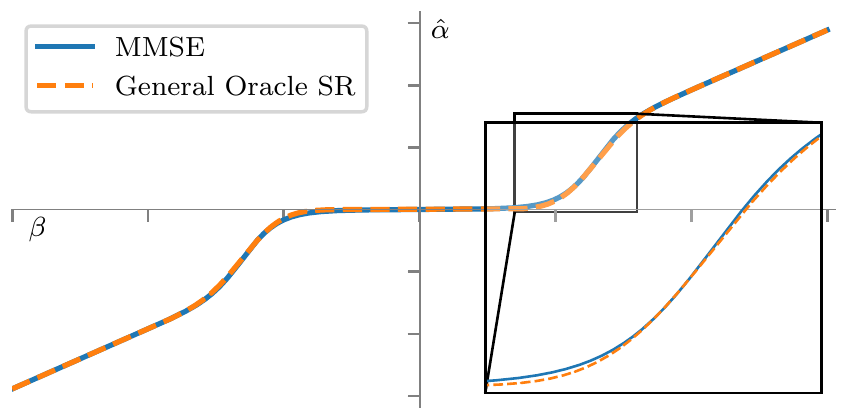}
        \caption{Shrinkage curves comparison.}
        \label{fig:shrink}
    \end{subfigure}
    \begin{subfigure}{0.5\textwidth}
    		\centering
	    	\includegraphics{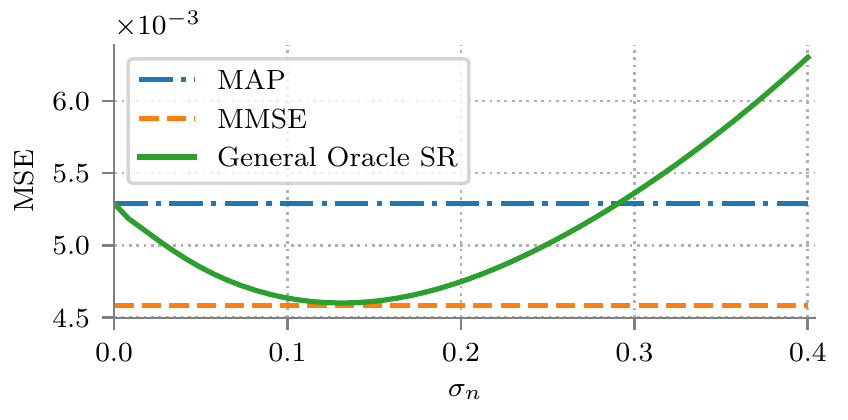}
    		\caption{General SR estimator's MSE for varying $\sigma_n$.}
    		\label{fig:unitaryMse}
    \end{subfigure}
    	\label{fig:unitaryPerformance}
	\caption{The asymptotic general SR estimator vs. the MMSE. $\Dv$ is a unitary $100 \times 100$ dictionary, $\nuv\sim\mathcal{N}\left(\0,\sigma_{\nuv}^{2}\Iv\right)$, $\sigma_{\nuv}=0.2$, $P_i=0.05$ $\forall i\in[m]$ and $\alphav_{s}\sim\mathcal{N}\left(\0,\Iv\right)$.}
\end{figure}


\subsection{Finding the Optimal Parameters for the Unitary Case}
To optimize the free parameters $\lambda$ and $\sigma_n$, we propose to use Stein's Unbiased Risk Estimate (SURE) \cite{stein1981estimation} which measures an estimator's MSE up to a constant when the additive noise is Gaussian. The SURE formulation of the expected MSE is given by
\begin{align}
\mu(\mathcal{H}_{\lambda}^{-}(\betav,\sigma_n), \betav) = \|&\mathcal{H}^{-}_{\lambda}(\betav,\sigma_n)\|_2^2 − 2\mathcal{H}_{\lambda}^{-}(\betav,\sigma_n)^T\betav \\ + &2\sigma_\nu^2 \nabla_{\beta}\mathcal{H}_{\lambda}^{-}(\betav,\sigma_n).
\end{align}
In the unitary case this is further simplified to an element-wise sum:
\begin{align}
\mu(\mathcal{H}_{\lambda}^{-}(\betav,\sigma_n), \betav) &= \sum_i\mu(\mathcal{H}_{\lambda}^{-}(\beta_i,\sigma_n), \beta_i).
\end{align}
Expanding the estimator, one gets
\begin{align}
\mu(\mathcal{H}_{\lambda}^{-}(\betav,\sigma_n), \betav) &=\sum_i\mathcal{H}_{\lambda}^{-}(\beta_i,\sigma_n)^2 − 2\mathcal{H}_{\lambda}^{-}(\beta_i,\sigma_n)\beta_i \\
&\quad\quad+ 2\sigma_\nu^2 \frac{d}{d\beta_i}\mathcal{H}_{\lambda}^{-}(\beta_i,\sigma_n),
\label{eq:sureUnitary}
\end{align}
and we wish to optimize for $\sigma_n$ and $\lambda$:
\begin{equation}
\sigma_n,\lambda = \argmin_{\sigma_n,\lambda}~\mu(\mathcal{H}_{\lambda}^{-}(\betav,\sigma_n), \betav).
\end{equation}
This can be further simplified, as shown in Appendix \ref{appendix:sure}.
Also, this Appendix depicts the surface $\mathbb{E}_{\n}\mu$ for a specific experiment. Interestingly, we observe that the empirically obtained optimal $\lambda$ is quite close to the threshold suggested by the MAP estimator.

Note that in this process, we obtain an MMSE approximation without explicitly knowing the prior distribution of each element $P_i$. Furthermore, if $\sigma_{\alpha}$ is not known, one can easily estimate it as follows: The dictionary is unitary and therefore the mean energy of the signal $\y$ is $\sigma_{\nu}^2 + \sigma_{\alpha}^2$. Assuming we know $\sigma_{\nu}$ one can easily obtain $\sigma_{\alpha}$.


\section{General SR in the Single Atom Case}
\label{sec:singleAtom}

\subsection{Cardinality $1$ Performance}
While the unitary case is simpler to analyze, most applications rely on overcomplete dictionaries. In Section \ref{sec:general_algorithm}, we already showed that empirically, Algorithm \ref{algorithm:SR} improves the MSE performance of standard pursuit algorithms in the general case. We now try to further analyze Algorithm \ref{algorithm:SR} by introducing a single atom assumption, i.e. assuming that the cardinality of the sparse vectors is restricted to one. From \cite{Elad2009} we have that in this case, the MAP estimator described in Section \ref{sec:bayes} boils down to the following form:
\begin{equation}
\hat{\alpha}_i^{\text{MAP}}(\y)=\begin{cases}c^2\beta_{\hat{S}},&i=\hat{S}\\0,&i\neq \hat{S}\end{cases},\quad \beta_{\hat{S}}=\dv_{\hat{S}}^T\y,\quad c^2=\frac{\sigma_{\alpha}^2}{\sigma_{\alpha}^2 + \sigma_{\nu}^2},
\end{equation}
where $\hat{\alpha}_i^{\text{MAP}}$ is the $i$-th index in the vector $\hat{\alphav}^{\text{MAP}}$, $\dv_i$ is the $i$th atom and $\hat{S}$ represents the chosen atom index:
\begin{align}
\hat{S}&=\argmin_{S} \|\beta_S\dv_S-\y\|^2_2=\argmax_{S}{\left|\dv_S^T\y\right|}.
\end{align}

\begin{proposition}
Let $\Dv$ be a dictionary with normalized atoms and $\alphav$ a sparse representation vector such that $\|\alphav\|_0=1$, and suppose we use Algorithm \ref{algorithm:SR} with the MAP estimator as a pursuit algorithm. Then, asymptotically, one obtains $\hat{\alphav}$ such that its $i$-th index is $\hat{\alpha}_i= c^2\dv_i^T\y\cdot P(\hat{S}=i)$, where $P(\hat{S}=i)$ is the probability of the SR process to retrieve the support $\hat{S}$.
\label{proposition:singleAsymptotic}
\end{proposition}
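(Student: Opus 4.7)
The plan is to unpack what Algorithm \ref{algorithm:SR} does in each iteration under the cardinality-$1$ MAP pursuit, and then invoke the strong law of large numbers. Concretely, in each iteration $k$ the algorithm (i) perturbs the measurement as $\y+\n_k$, (ii) runs the single-atom MAP to choose an index $\hat{S}_k = \argmax_{i} |\dv_i^T(\y+\n_k)|$, and (iii) applies the oracle estimator with respect to the \emph{original} $\y$ on that support. So the first step is to write the per-iteration output in closed form.

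For the oracle step, I would specialize Equation \eqref{eq:oracle} to $|\hat{S}_k|=1$. With normalized atoms $\Dv_{\hat{S}_k}^T\Dv_{\hat{S}_k}=1$, which gives $\Qv_{\hat{S}_k}=1/\sigma_\alpha^2+1/\sigma_\nu^2$, and hence $\hat{\alphav}^{\text{Oracle}}_{\hat{S}_k}(\y)$ places the scalar $c^2\dv_{\hat{S}_k}^T\y$ at index $\hat{S}_k$ and zero elsewhere. Rewriting the $i$-th coordinate with an indicator, the per-iteration vector satisfies
\begin{equation}
[\hat{\alphav}_k]_i \;=\; c^2\,\dv_i^T\y \,\cdot\, \mathbf{1}\!\left[\hat{S}_k = i\right],
\end{equation}
which I would emphasize uses the fact that $c^2\dv_i^T\y$ is a deterministic function of $\y$ and $i$ only (the SR noise enters only through which index is selected).

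The second step is to average over iterations. Since $\hat{\alpha}_i=\frac{1}{K}\sum_{k=1}^{K}[\hat{\alphav}_k]_i = c^2\dv_i^T\y\cdot\frac{1}{K}\sum_{k=1}^K\mathbf{1}[\hat{S}_k=i]$, and the $\hat{S}_k$ are i.i.d.\ draws (conditioned on the fixed $\y$) from the distribution induced by the SR noise through the argmax map, the strong law of large numbers yields $\frac{1}{K}\sum_{k=1}^K\mathbf{1}[\hat{S}_k=i]\to P(\hat{S}=i)$ almost surely as $K\to\infty$. This gives $\hat{\alpha}_i = c^2\dv_i^T\y \cdot P(\hat{S}=i)$ as claimed.

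I do not expect a substantive obstacle: the only subtle point is conceptual rather than technical, namely that $P(\hat{S}=i)$ here denotes the probability, \emph{conditional on the measurement} $\y$, that the single-atom MAP applied to $\y+\n_k$ picks atom $i$; this probability is well defined because the argmax is unique with probability one under the absolutely continuous Gaussian SR noise, so ties in the argmax contribute a null set and do not spoil the indicator decomposition.
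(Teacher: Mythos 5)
Your proof is correct and follows essentially the same route as the paper: both write the per-iteration output as the deterministic oracle value $c^2\dv_i^T\y$ placed at the randomly selected index, and then pass to the limit of the empirical average (the paper phrases this as $\mathbb{E}_{\n}[\hat{\alphav}(\y,\n)]$ computed by conditioning on the selected support, which is the same computation as your indicator-plus-strong-law argument). Your explicit specialization of the oracle formula to $|S|=1$ and the remark on almost-sure uniqueness of the argmax are details the paper leaves implicit, but they do not change the substance.
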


\begin{proof}
Following the subtractive hard thresholding concept suggested in the previous section, we introduce the following SR estimator:
\begin{equation}
\hat{\alpha}_{k_i}(\y,\n_k)=\begin{cases}c^2\beta_{\hat{S}_{\text{SR}}},&i=\hat{S}_{\text{SR}}\\0,&i\neq \hat{S}_{\text{SR}}\end{cases},\quad \beta_{\hat{S}_{\text{SR}}}=\dv_{\hat{S}_{\text{SR}}}^T\y,
\end{equation}
where this time the chosen index $\hat{S}_{\text{SR}}$ is affected by an additive SR noise:
\begin{align}
\hat{S}_{\text{SR}}=&\argmin_{S} \|\dv_S^T(\y+\n)-(\y+\n)\|^2_2\\=&\argmax_{S} \left|\dv_S^T\left(\y + \n_k\right)\right|.
\label{eq:maxAtom}
\end{align}
Hence, asymptotically, Algorithm \ref{algorithm:SR} converges to
\begin{align}
\mathbb{E}_{\n}\left[\hat{\alphav}(\y,\n)\right]=&\mathbb{E}_S\left[\mathbb{E}_{\n|S}\left[\hat{\alphav}(\y,\n)\middle|S\right]\right]
\\
=&\sum_{i=1}^m \mathbb{E}_{\n|S}\left[\hat{\alphav}\middle| S=i\right] P\left(\hat{S}=i\right)\\
=&c^2\begin{bmatrix} \beta_1 P\left(\hat{S}=1\right) \\ \vdots \\ \beta_m P\left(\hat{S}=m\right) \end{bmatrix}.
\end{align}
as claimed. 
\end{proof}

As before, the difference between the general MMSE estimator \eqref{eq:mmseSum} and Algorithm in Equation \ref{algorithm:SR} in its oracle form is in the weight assigned to each solution $P(\hat{S})$. We now further analyze the weight obtained in the SR process under the single atom assumption. As stated in \eqref{eq:maxAtom}, the chosen atom $i$ is the most correlated one with the input SR noisy signal:
\begin{align}
P(\hat{S}=i)&=P\left(\left|\dv_i^T(\y+\n)\right| > \max_{j\neq i} \left| \dv_j^T(\y+\n)\right|\right)\\
&=P\left( \left| \tilde{n}_i \right| > \max_{j \neq i} \left| \tilde{n}_j \right| \right),
\label{eq:atomProbability}
\end{align}
where we defined $\tilde{\n}\triangleq \Dv^T\left(\y+\n\right) $. Choosing $\n\sim\mathcal{N}\left(\bm{0},\sigma_n^2\bm{I}\right)$ then $\tilde{\n}$ is Gaussian as well:
\begin{equation}
\tilde{\n}=\begin{bmatrix}
\tilde{n}_1\\
\vdots\\
\tilde{n}_m
\end{bmatrix}
\sim \mathcal{N} \left( \Dv^T\y, \sigma_n^2 \Dv^T \Dv\right).
\label{eq:projNoise}
\end{equation}
Therefore, the probability of choosing the $i$-th atom is distributed as the probability of the maximum value of a random Gaussian vector with \textit{correlated} variables, since in the non-unitary case, $\Dv^T\Dv$ is not a diagonal matrix.
Facing this difficulty, we propose to tackle it as follows:
\begin{enumerate}
\item Instead of adding the SR noise to $\y$, we can add it to the projected signal $\Dv^T\y$, thus avoiding the variables $\left\{\tilde{n}_i\right\}_{i=1}^m$ being correlated.

\item Add statistical assumptions regarding the dictionary $\Dv$, leading to average case conclusions.

\item Change the pursuit used. Intuitively, using the MAP will produce the optimal results, since it retrieves the most probable support for any given signal. However, changing the pursuit might ease the analysis of the asymptotic estimator. We leave the study of this option for future work.
\end{enumerate} 
We now analyze the first two proposed alternatives.


\subsection{Add Noise to the Representation Domain}
\begin{proposition}
Let the same conditions as Proposition \ref{proposition:singleAsymptotic} hold. Moreover, let $\n\sim\mathcal{N}\left(\bm{0},\sigma_n^2\bm{I}\right)$ be an SR noise added to the representation domain. Then, the probability to retrieve the $i$-th support is given by
\begin{multline}
P(\hat{S}=i)=\int_0^\infty \frac{1}{\sqrt{2\pi}\sigma_n}\left[e^{-\frac{\left(t+\beta_i\right)^2}{2\sigma_n^2}} + e^{-\frac{\left(t-\beta_i\right)^2}{2\sigma_n^2}}\right]
\\
\times \prod_{j \neq i}\left[1-\left(Q\left( \frac{t-\beta_j}{\sigma_n} \right) + Q\left( \frac{t+\beta_j}{\sigma_n} \right)\right)\right]dt.
\label{eq:oneAtomProb}
\end{multline}
\label{proposition:projNoise}
\end{proposition}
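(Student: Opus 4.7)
The plan is to exploit the fact that injecting the SR noise in the representation domain (rather than in the signal domain) \emph{decorrelates} the perturbations of the inner products $\beta_j = \dv_j^T \y$: writing $\tilde\beta_j := \beta_j + n_j$ with $\n \sim \mathcal{N}(\bm{0},\sigma_n^2 \Iv)$ yields mutually independent scalar Gaussians $\tilde\beta_j \sim \mathcal{N}(\beta_j,\sigma_n^2)$. This is exactly the nuisance that the preceding discussion around Equation \eqref{eq:projNoise} flags as the reason for changing the noise injection site. Since the single-atom MAP pursuit selects $\hat S = \argmax_j |\tilde\beta_j|$, the problem collapses to computing the probability that the maximum of independent folded-normal variables is attained at index $i$.

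My first step would therefore be to define $Z_j := |\tilde\beta_j|$ and, using independence, to factor
\[
P(\hat S = i) = P\!\left(Z_i > Z_j, \ \forall j \neq i\right) = \int_0^{\infty} f_{Z_i}(t) \prod_{j \neq i} F_{Z_j}(t) \, dt.
\]
Plugging in the standard folded-normal density $f_{Z_i}(t) = \tfrac{1}{\sqrt{2\pi}\sigma_n}\bigl[e^{-(t-\beta_i)^2/(2\sigma_n^2)} + e^{-(t+\beta_i)^2/(2\sigma_n^2)}\bigr]$ for $t \geq 0$ already reproduces the first bracketed factor in the claim.

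The remaining piece is to re-express the folded-normal CDF through $Q(\cdot)$. Starting from $F_{Z_j}(t) = P(-t < \tilde\beta_j < t) = \Phi\bigl((t-\beta_j)/\sigma_n\bigr) - \Phi\bigl((-t-\beta_j)/\sigma_n\bigr)$ and applying the identities $Q(x) = 1 - \Phi(x)$ and $Q(-x) = 1 - Q(x)$, a short manipulation yields $F_{Z_j}(t) = 1 - Q\bigl((t-\beta_j)/\sigma_n\bigr) - Q\bigl((t+\beta_j)/\sigma_n\bigr)$, which is precisely the factor inside the product in the statement. Substituting these two ingredients into the displayed integral recovers the claimed expression verbatim.

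The main obstacle is not analytical but notational: one has to track sign conventions carefully when flipping between $\Phi$ and $Q$, and to write the folded-normal density symmetrically in $\pm \beta_i$ so that both exponentials surface explicitly. Everything else is a textbook folded-normal computation made available by the coordinate-wise independence that noising in the representation domain provides.
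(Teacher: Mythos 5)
Your proposal is correct and follows essentially the same route as the paper: both condition on the value $t$ of the folded-normal variable at index $i$, use the independence afforded by noising the representation domain to factor $P(\max_{j\neq i}|\tilde\beta_j| < t)$ into a product of CDFs written via $Q(\cdot)$, and integrate against the folded-normal density of $|\tilde\beta_i|$. Your explicit $\Phi$-to-$Q$ bookkeeping fills in a step the paper delegates to its Appendix A computation, but the argument is the same.
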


\begin{proof}
We continue from \eqref{eq:atomProbability}, only now the noise $\tilde{n}_i$ is white and has the following properties:
\begin{equation}
\tilde{\n}
\sim \mathcal{N} \left( \Dv^T\y, \sigma_n^2 \boldsymbol{I}_{m \times m}\right).
\end{equation}
Plugging this into \eqref{eq:atomProbability}: 
\begin{align}
P(\hat{S}=i)&=P\left(\left|\dv_i^T\y+n_i\right| > \max_{j\neq i} \left| \dv_j^T\y+n_j\right|\right)
\\
&=P\left( \left| \tilde{n}_i \right| > \max_{j \neq i} \left| \tilde{n}_j \right| \right)
\\
&=\int_0^\infty P\left( \max_{j \neq i} \left| \tilde{n}_j \right| < t \middle| \left| \tilde{n}_i \right|=t \right) P \left( \left| \tilde{n}_i \right| = t \right)dt
\end{align}
\begin{align}
&=\int_0^\infty P\left( \max_{j \neq i} \left| \tilde{n}_j \right| < t \right) P \left( \left| \tilde{n}_i \right| = t \right)dt.
\label{eq:supportProbability}
\end{align}
For the first term, the elements of $\tilde{\n}$ are independent, and therefore
\begin{align}
P\bigg( \max_{j \neq i} \left| \tilde{n}_j \right| & < t \bigg) = \prod_{j\neq i}P\left(\left| \tilde{n}_j \right| < t \right) 
\\
&= \prod_{j\neq i}\left[ 1 - P\left(\left| \tilde{n}_j \right| > t \right)\right] 
\\
&=\prod_{j\neq i}\left[ 1 - \left(Q\left(\frac{t+\beta_j}{\sigma_n} \right) + Q\left(\frac{t-\beta_j}{\sigma_n} \right) \right)\right],
\end{align}
where the last equality follows similar steps as in Appendix \ref{appendix:unitary}. The second term in \eqref{eq:supportProbability} is simply the PDF of the absolute value of a Gaussian variable, therefore
\begin{align}
P\left( \left| \tilde{n}_i \right| = t  \right)&=\frac{1}{\sqrt{2\pi\sigma_n^2}}\left( e^{-\frac{\left(t-\beta_i\right)^2}{2\sigma_n^2}} + e^{-\frac{\left(t+\beta_i\right)^2}{2\sigma_n^2}}\right).
\end{align}
Putting the two terms back into \eqref{eq:supportProbability} we obtained the claimed relation in Equation (\ref{eq:oneAtomProb}).
\end{proof}

The obtained expression cannot be solved analytically but can be computed numerically. We now empirically examined the properties of the derived estimator. We generated a random dictionary of size $25\times 50$ with iid Gaussian elements and normalized the atoms. Then, we generated sparse vectors with a single non-zero element with a Gaussian value $\alpha_S\sim\mathcal{N}\left(0,1\right)$. Noisy measurements were generated by multiplying the dictionary with the sparse vectors and adding noise ${\nuv\sim\mathcal{N}\left(\bm{0},\bm{I}\sigma_{\nu}^2\right)},~ \sigma_{\nu}=0.2$. In Figure \ref{fig:oneAtomMSE} we compare the MSE of the MMSE and MAP estimators to Algorithm \ref{algorithm:SR} when the noise is added to the representation domain. Indeed, the proposed method improves the MSE of the MAP estimator and almost achieves the MMSE estimator's performance for the right choice of $\sigma_n$.

In Figure \ref{fig:oneAtomProb} we show the probability of recovering the true support $P_{\text{success}}$ as a function of $\sigma_n$, both from \eqref{eq:oneAtomProb} and from iterating Algorithm \ref{algorithm:SR} 100 times, each time picking the most correlated atom. We also compare it to the MMSE weight from \eqref{eq:mmseGeneral}. The optimal $\sigma_n$ in terms of MSE is drawn as a vertical dashed line. Notice that for the optimal choice of $\sigma_n$, the probability of the true support to be chosen is similar to that given in the MMSE solution. In other words, the optimal $\sigma_n$ is the one that approximates the weight of the support to the weight given by the MMSE expression. The trend in \eqref{eq:oneAtomProb} shows, unsurprisingly, that as we add noise, the probability of successfully recovering the true support decreases. In the limit, when $\sigma_n \rightarrow \infty$ the signal will be dominated by the noise and the success probability will be uniform among all the atoms, i.e. equal to $P_{success} = \frac{1}{m}$.
\begin{figure}[t]
    \includegraphics{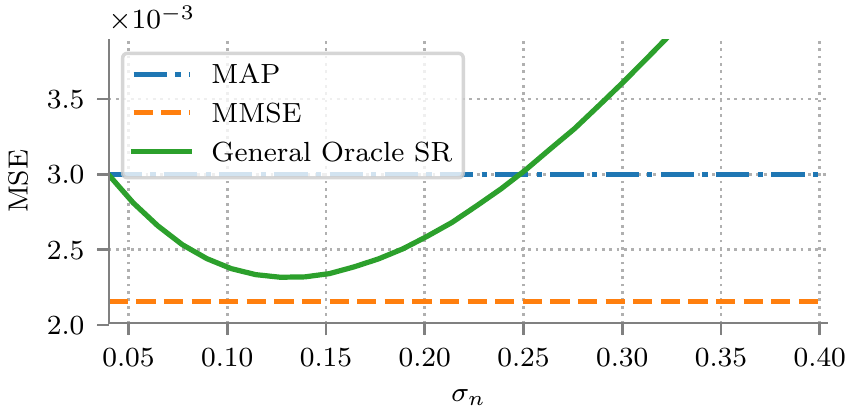}
    \caption{MSE comparison of the MAP, MMSE and 100 iterations of General Oracle SR with non unitary overcomplete dictionary and a sparse representation with 1 non-zero element.}
    \label{fig:oneAtomMSE}
\end{figure}
\begin{figure}[t]
	\includegraphics{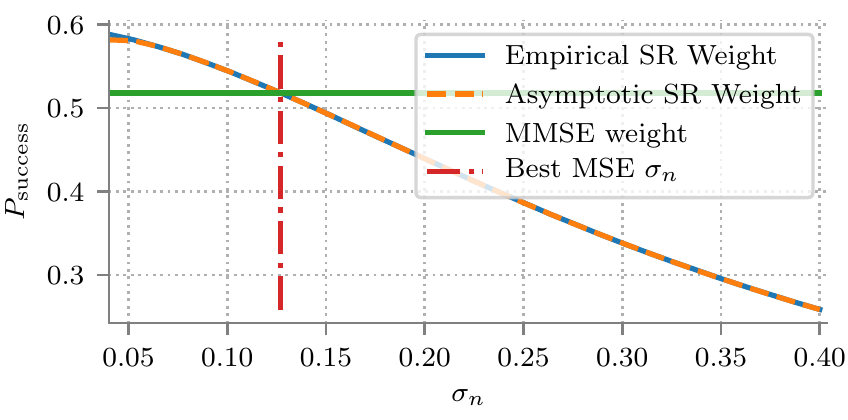}
    \caption{Numerical integration of $P(S=\text{True Suport}|y)$ and the empirical weight achieved by 100 iterations of SR.}
    \label{fig:oneAtomProb}
\end{figure}

Due to these findings, and since the optimal MSE is comparable to that of the MMSE estimator, one might expect a similar behavior for most of the other possible supports. To examine this idea, we carried out the following experiment. We randomized many representations $\alphav$, all containing a non-zero coefficient in the same index. Then, we ploted the histogram of the average empirical probability of each element in the vector $\alphav$ to be non-zero (obtained by pursuit). Finally, we compared these probabilities to the weights of the MMSE from \eqref{eq:mmseGeneral}. This experiment was repeated for various $\sigma_n$ values and for each such value we compared the entire support histogram. We expect the two histograms (SR and MMSE) to fit for the right choice of added noise parameter $\sigma_n$. In Figure \ref{fig:oneAtomHistograms} we see the results of the described experiment. 

Analyzing the results obtained, we see that when no noise is added (this is the average case of the MAP estimator), apart from the true support, the elements have a much lower weight than the MMSE. As noise is added, the true support's probability decreases and its weight is divided among the other elements. At some point the two histograms almost match each other completely. At that point, the SR MSE almost equals that of the MMSE. As we add more noise, the true support's probability keeps decreasing and the other elements keep increasing and the histograms are now farther apart from each other. When we reach $\sigma_n \rightarrow \infty$ we obtain uniform probability for all the supports.

To further demonstrate their similarity, the left axis in Figure \ref{fig:oneAtomHistogramsDist} is the $D_{K\|L}$ distance (Kullback-Leibler divergence) between the two histograms, and the right axis is the MSE. As expected, when their $K\|L$ divergence is the smallest, the MSE is minimal.
\begin{figure*}[t]
	\includegraphics{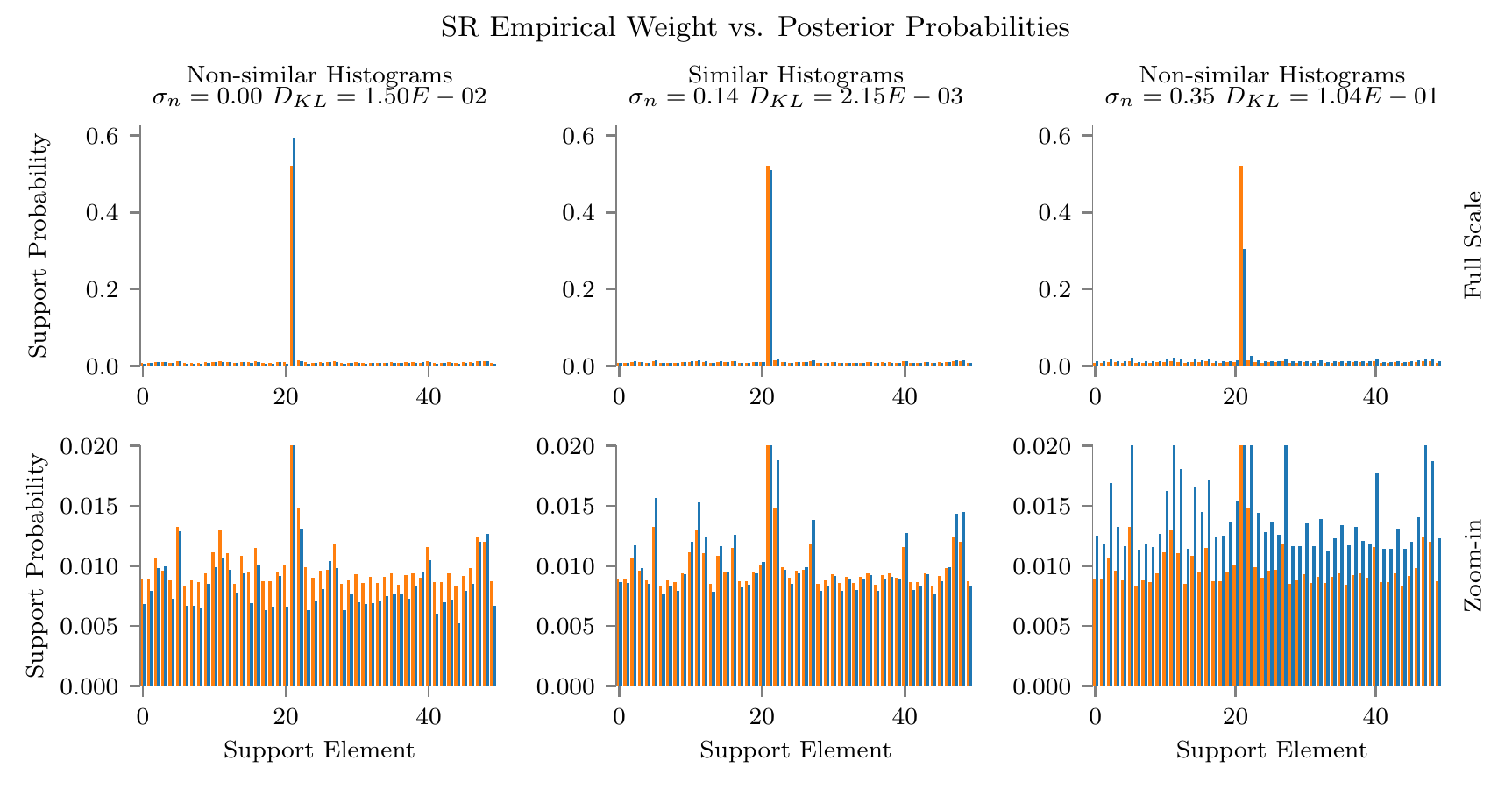}
    \caption{{\color{orange}MMSE (in orange)} and 100 iterations of {\color{RoyalBlue}General-SR (in blue)} support weights histograms for varying values of $\sigma_n$; Top row presents the entire scale; Bottom row is zoomed-in to emphasize the differences in the smaller weights; Title for each column shows the $\sigma_n$ value and KL divergence between the histograms; Atom number 21 is the true support.}
    \label{fig:oneAtomHistograms}
\end{figure*}
\begin{figure}[t]
    \centering
    \includegraphics{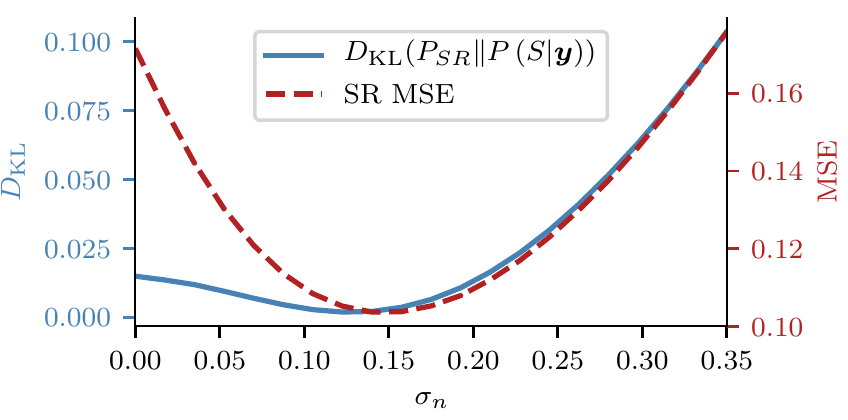}
    \caption{Subtractive SR MSE and $D_{K\|L}$ divergence between the MMSE and SR weights. When the divergence is small so is the MSE.}
    \label{fig:oneAtomHistogramsDist}
\end{figure}

\subsection{Statistical Assumptions on the Dictionary $\Dv$}
\label{sec:dictionaryPrior}
In this section we will try to simplify the expression in \eqref{eq:atomProbability} by assuming that the columns of the dictionary are statistically uncorrelated.
Formally, our assumption is that the atoms $\dv_i$ are drawn from some random distribution that obeys the following properties:
\begin{equation}
\mathbb{E} \left[\dv_i^T \dv_j\right]=0, \quad i \neq j \quad 1\leq i,j \leq m,
\label{eq:dictAssumption1}
\end{equation}
and that the atoms are normalized:
\begin{equation}
\|\dv_i\|_2=1, \quad 1\leq i \leq m.
\label{eq:dictAssumption2}
\end{equation}
\begin{proposition}
Let the same conditions as Proposition \ref{proposition:singleAsymptotic} hold and furthermore suppose that the dictionary's atoms are statistically uncorrelated. Then, when using Algorithm \ref{algorithm:SR} with the MAP estimator, adding white Gaussian SR noise to the signal domain with standard deviation $\sigma_n$ is equivalent to adding white Gaussian SR noise to the representation domain with the same standard deviation $\sigma_n$.
\label{proposition:uncorrelated}
\end{proposition}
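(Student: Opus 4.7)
By Proposition \ref{proposition:singleAsymptotic}, the asymptotic output of Algorithm \ref{algorithm:SR} with MAP pursuit under the single-atom assumption is $\hat{\alpha}_i = c^2 \dv_i^T\y \cdot P(\hat{S}=i)$. Since the factor $c^2 \dv_i^T\y$ is independent of where the SR noise is injected, it suffices to show that the atom-selection probability $P(\hat{S}=i)$ is the same in both perturbation modes. The plan is therefore to identify the distribution of the projected perturbation $\tilde{\n}$ that drives the selection rule \eqref{eq:maxAtom} in each case, and to exhibit the dictionary assumptions \eqref{eq:dictAssumption1}--\eqref{eq:dictAssumption2} as the reason the two distributions coincide.

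First I would compute the projected noise in each case. When the SR noise $\n \sim \mathcal{N}(\0, \sigma_n^2 \Iv)$ is added in the signal domain, the relevant statistic is $\Dv^T(\y + \n) \sim \mathcal{N}\!\left(\Dv^T\y, \sigma_n^2 \Dv^T\Dv\right)$, while when it is added in the representation domain it is $\Dv^T\y + \n \sim \mathcal{N}\!\left(\Dv^T\y, \sigma_n^2 \Iv\right)$, as already exploited in the proof of Proposition \ref{proposition:projNoise}. Next I would use the dictionary assumptions to evaluate $\Dv^T\Dv$: the diagonal entries satisfy $(\Dv^T\Dv)_{ii} = \|\dv_i\|_2^2 = 1$ by \eqref{eq:dictAssumption2}, and the off-diagonal entries satisfy $\mathbb{E}\!\left[(\Dv^T\Dv)_{ij}\right] = \mathbb{E}\!\left[\dv_i^T\dv_j\right] = 0$ by \eqref{eq:dictAssumption1}, so $\mathbb{E}_\Dv\!\left[\Dv^T\Dv\right] = \Iv$. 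Under the statistical model on the dictionary, this lets me replace $\Dv^T\Dv$ by $\Iv$ inside the Gaussian covariance, so the projected-noise distributions coincide. The selection probability \eqref{eq:atomProbability} then reduces in both cases to the same integral, namely the expression \eqref{eq:oneAtomProb} already derived in Proposition \ref{proposition:projNoise}, and the asymptotic outputs agree.

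The main obstacle is the rigor of substituting the expected Gram matrix $\Iv$ for the pathwise Gram matrix $\Dv^T\Dv$. The cleanest interpretation, consistent with the average-case spirit of Section \ref{sec:dictionaryPrior}, is that $P(\hat{S}=i)$ is computed as an average over the dictionary ensemble, so that the off-diagonal contributions to the covariance of $\Dv^T\n$ vanish directly by \eqref{eq:dictAssumption1}; alternatively, one can invoke concentration of the Gram matrix around its mean $\Iv$ to obtain the same conclusion with high probability for sufficiently large $n$. Either interpretation yields the stated equivalence: under uncorrelated atoms the effective noise on $\Dv^T\y$ produced by signal-domain perturbation is white Gaussian with variance $\sigma_n^2$, which is exactly what representation-domain perturbation produces.
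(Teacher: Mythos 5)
Your proposal follows essentially the same route as the paper: both compute the mean and covariance of the projected perturbation $\Dv^T(\y+\n)$, obtaining $\sigma_n^2\Dv^T\Dv$ conditionally on $\Dv$, and then invoke the assumptions \eqref{eq:dictAssumption1}--\eqref{eq:dictAssumption2} to replace this by $\sigma_n^2\Iv$ in the average case over the dictionary ensemble. If anything, you are more explicit than the paper about the delicate step --- substituting the expected Gram matrix for the pathwise one --- which the paper performs without further justification.
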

\begin{proof}
From \eqref{eq:projNoise}, we have $\tilde{\n}=\Dv^T(\y+\n)$.
Observe that given the dictionary $\Dv$, each of the elements in this vector is a Gaussian random variable:
\begin{multline}
\tilde{n}_i|\dv_i = \dv_i^T(\y+\n) = \sum_{k=1}^{n}d_{i,k}(y_k+n_k)=
\\
\sum_{k=1}^{n}d_{i,k}y_k + \sum_{k=1}^{n}d_{i,k}n_k=\mu_i+\sum_{k=1}^{n}d_{i,k}n_k.
\end{multline}
Given the measurements and the dictionary, the first sum $\sum_{k=1}^{n}d_{i,k}y_k\triangleq \mu_i$ is some constant. The second term in the expression is a weighted sum of $n$ iid Gaussian random variables $\{n_k\}_{k=1}^n$, hence it is Gaussian. Clearly, its mean value is 0, and its standard deviation is $\sigma_n$, hence $\tilde{n}_i|\dv_i \sim \mathcal{N}(\dv_i^T\y, \sigma_n)$ for $n\rightarrow\infty$.

Now we turn to analyze the properties of the entire vector $\tilde{\n}$. From the previous analysis, given the dictionary $\Dv$, $\tilde{\n}$ is a random Gaussian vector with the mean vector $\muv_{\tilde{\n}}|\Dv=\Dv^T\y$. Using the properties of the noise $\mathbb{E}\left[\n\n^T\right]=\sigma_n^2I$, the auto-correlation matrix of $\tilde{\n} | \Dv$ is by definition:
\begin{equation}
\Sigmav | \Dv =\mathbb{E} \left[\Dv^T\n\n^T\Dv\middle|\Dv\right]=\Dv^T \mathbb{E}\left[\n\n^T\right] \Dv=\sigma_n^2\Dv^T\Dv.
\end{equation}
Analyzing the average case, the mean vector is of the form:
\begin{equation}
\muv_{\tilde{\n}}=\mathbb{E}_{\Dv}\left[\Dv^T\y\right],
\end{equation}
and the auto-correlation matrix is simply diagonal:
\begin{align}
\Sigmav=&\mathbb{E_{\Dv}}[\Sigmav | \Dv]= \mathbb{E}\left[\sigma_n^2\Dv^T\Dv\right]=\sigma_n^2\Iv,
\end{align}
where we used the assumptions in \eqref{eq:dictAssumption1} and \eqref{eq:dictAssumption2}.
\end{proof}

From Proposition \ref{proposition:uncorrelated}, the uncorrelated atoms assumption leads $\tilde{\n}$ to have the same properties as in Proposition \ref{proposition:projNoise}. Therefore, the empirical analysis following Proposition \ref{proposition:projNoise} holds for this case as well.

To demonstrate empirically that indeed the two are the same, we performed the following experiment. We sampled a random dictionary $\Dv$ and random sparse representations $\alphav$ with cardinality of $1$ as the generative model described earlier suggests. In this experiment we used a dictionary $\Dv$ of size $200 \times 400$ and $2000$ random sparse representations. Using the generated vectors and dictionary we created signals $\y$: $\y=\Dv\alphav + \nuv$. To denoise the signals, we once used Algorithm \ref{algorithm:SR} with noise $\n_{\text{sig.}} \sim \mathcal{N}(\0, \sigma_n^2 \Iv_{n \times n})$ added to the signal vectors $\y + \n_{\text{sig.}}$, and once with noise $\n_{\text{rep.}} \sim \mathcal{N}(\0, \sigma_n^2 \Iv_{m \times m})$ added to the representation domain $\Dv^T\y + \n_{\text{rep.}}$. As before, we use the MAP estimator as the chosen pursuit. In Figure \ref{fig:noiseLocation} we see that the MSE of the two cases result in an almost identical curve. Small differences exist due to the finite dimensions used in the experiment.
\begin{figure}[t]
	\centering
	\includegraphics{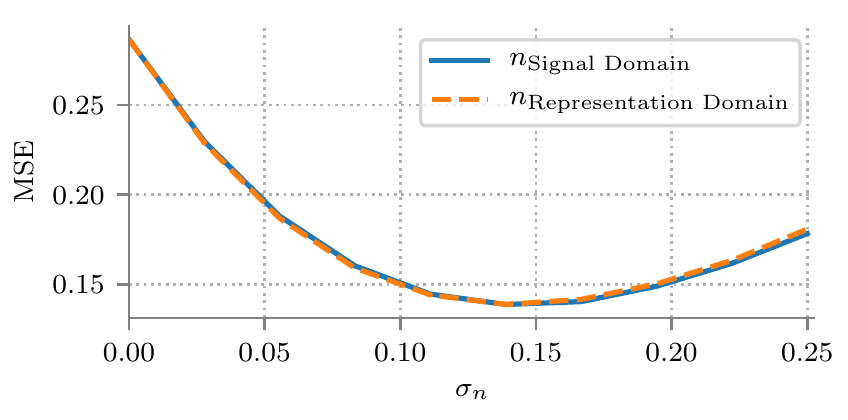}
	\caption{Noise location comparison. 500 iterations of General Oracle SR with MAP estimator as a pursuit method. $\nuv\sim\mathcal{N}\left(\0,\sigma_{\nuv}^{2}I\right)$, $\sigma_{\nuv}=0.2$, $\left|\left|\alphav\right|\right|_{0}=1$ and $\alphav_{s}\sim\mathcal{N}\left(0,1\right)$. The SR noises are $\n_{\text{Signal Domain}} \sim \mathcal{N}(0, \sigma_n I_{n \times n})$. ${\n_{\text{Representation Domain}} \sim \mathcal{N}(0, \sigma_n I_{m \times m})}$.}
	\label{fig:noiseLocation}
\end{figure}

Note that the noise energy added in the representation domain is much larger than that of the noise added to the signal, i.e. $E\|\n_{\text{rep.}}\|_2^2=m\sigma_n^2 > n\sigma_n^2 = E\|\n_{\text{sig.}}\|_2^2$ but the results remain the same due to the unit norm of the dictionary $\|\dv_i\|_2=1$, and of course, the uncorrelated assumption.

To conclude this subsection, statistically uncorrelated atoms provide a way to further our theoretical analysis. In this case, adding noise in the signal domain converges to the analysis addressed in the previous subsection, where noise is instead added in the representation domain. As a result, similar results and conclusions can be drawn.


\section{What Noise Should be Used?}
\label{sec:noise}
Throughout this work we naturally used white Gaussian noise by default. In this section we question this decision and wonder whether we can use noise models with different distributions and whether it affects the performance of the stochastic resonance estimator.

\begin{theorem}
Let $\n$ be a random vector with iid elements sampled from a distribution with finite mean and variance and used as SR noise in Algorithm \ref{algorithm:SR}. Then, as the dimension of the sparse representation grows asymptotically, the estimate given by Algorithm \ref{algorithm:SR} is not affected by $\n$'s distribution.
\end{theorem}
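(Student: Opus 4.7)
The plan is to exploit the fact that pursuit algorithms interact with the SR noise $\n$ only through linear functionals $\boldsymbol{v}^T \n$ with $\boldsymbol{v}$ in the column span of $\Dv$, and then invoke a multivariate Central Limit Theorem so that these projected noise terms become asymptotically Gaussian regardless of the underlying noise distribution. Once the ``effective'' input to the pursuit has a distribution-free limit, the averaged output of Algorithm \ref{algorithm:SR} must also be asymptotically insensitive to the choice of $\n$.

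First, I would argue that both the support selection $\hat{S}_k$ and the subsequent Oracle/LS refinement in Algorithm \ref{algorithm:SR} depend on the perturbation $\n_k$ only through inner products with vectors in the range of $\Dv$. For OMP, each greedy step selects the atom of largest correlation with the current residual, which is an affine transformation of $\y + \n_k$ projected onto a subspace spanned by already-chosen atoms, so every relevant quantity is linear in $\n_k$. For BP with the bounded-error constraint, the KKT conditions that determine the support involve $\Dv^T(\y + \n_k - \Dv\alphav)$, again a linear functional of $\n_k$. The noise energy $\|\n_k\|_2^2$ appearing inside the $\epsilon$-constraint is a sum of $n$ i.i.d.\ terms with finite mean, and by the law of large numbers it concentrates around a deterministic constant as the dimension grows; it can therefore be treated as non-random in the limit.

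Next, I would apply the multivariate Lindeberg--Lévy CLT to the projected noise vector $\Dv^T \n_k = \sum_{j=1}^{n} \boldsymbol{r}_j n_{k,j}$, where $\boldsymbol{r}_j \in \mathbb{R}^m$ denotes the $j$-th row of $\Dv$. Since the entries of $\n_k$ are i.i.d.\ with finite mean $\mu$ and variance $\sigma^2$, this sum, once properly centered and scaled, converges in distribution to a multivariate Gaussian whose mean and covariance depend only on these first two moments and on $\Dv$. In particular, the limiting distribution coincides with the projection of the Gaussian SR noise used throughout the paper, up to matching moments. Under the overcomplete-dictionary regime where $m$ grows proportionally to $n$, this convergence is driven by the growing signal dimension.

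Finally, I would combine the Continuous Mapping Theorem with a uniform-integrability argument to lift this convergence from the projected noise to the averaged estimate produced by Algorithm \ref{algorithm:SR}. The Oracle/LS refinements are bounded in norm for fixed $(\y, \Dv)$ outside a set of vanishing probability where the relevant pseudo-inverse becomes ill-conditioned, yielding the integrability needed to pass from distributional convergence of the pursuit output to convergence of its expectation. The main obstacle is the measurability and almost-everywhere continuity of the pursuit map itself: for OMP, ties in the greedy selection form a Lebesgue-null set under the Gaussian limit and therefore do not affect the limit, while for BP one must verify that the solution map is continuous at the limiting Gaussian input, a standard but nontrivial fact from parametric convex optimization. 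Subject to these technicalities, the expected estimate becomes asymptotically independent of the specific distribution of $\n$, depending only on its first two moments.
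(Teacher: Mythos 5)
Your proposal rests on the same key idea as the paper's proof: the pursuit sees the SR noise only through the projection $\Dv^T\n$, each coordinate of which is a weighted sum of the iid entries of $\n$, and a central limit theorem makes that projection asymptotically Gaussian irrespective of the entries' distribution. The paper's argument stops essentially there --- it asserts the marginal asymptotic Gaussianity of each $\tilde{n}_i=\dv_i^T\n$ and concludes. You go considerably further: you justify why the algorithm's output is a function of $\n$ only through such linear functionals (greedy correlations for OMP, KKT conditions for BP, concentration of $\|\n\|_2^2$ for the constraint radius), you correctly use a \emph{multivariate} CLT so that the joint law of $\Dv^T\n$, not just its marginals, is pinned down, and you address the passage from convergence in distribution of the input to convergence of the averaged estimate via continuous mapping and uniform integrability. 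These are real gaps in the paper's one-line proof that your write-up fills. Two caveats: the Lindeberg--L\'evy theorem you cite applies to equally weighted iid sums; for the weighted sums $\sum_j d_{i,j}n_j$ you need the Lindeberg--Feller version together with an asymptotic-negligibility condition on the weights $d_{i,j}$ (e.g.\ $\max_j |d_{i,j}|\to 0$), without which the claim is simply false --- an atom equal to a standard basis vector yields $\tilde{n}_i=n_1$, which never becomes Gaussian. The paper shares this omission. Second, both you and the paper must reconcile the theorem's phrasing (``dimension of the sparse representation'' $m\to\infty$) with the fact that the relevant sums have $n$ terms; your remark that $m$ and $n$ grow proportionally is the right way to patch this, and it deserves to be stated as an explicit hypothesis.
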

\begin{proof}
Denoting $\tilde{\n} \triangleq \Dv^T\n$, each element $\tilde{n}_i$ is:
\begin{equation}
\tilde{n}_i = \dv_i^T\n = \sum_{j=1}^md_{i,j}n_j.
\end{equation}
Without loss of generality, we assume that the atoms are normalized, i.e., $\|\dv_i\|_2=1$. The above expression is a weighted average of the variables $\{n_j\}_{j=1}^{m}$. Since $\{n_j\}_{j=1}^m$ are iid and have bounded mean and variance, then the central limit theorem holds. Therefore, as $m$ increases, $\tilde{n}_i$ is asymptotically Gaussian regardless of the distribution of the original additive noise $n_j$.
\end{proof}
Following the previous statement, we experimented with a different distribution for a random noise vector. We employed an element-wise iid uniform noise with $0$ mean $n_{\mathcal{U}} \sim \mathcal{U}[-r, r]$. In order to be compatible with a Gaussian noise $n_{\mathcal{N}} \sim \mathcal{N}(0,\sigma_n^2)$ we chose $r=\sqrt{3}\sigma_n$ thus assuring the same standard deviation for the two cases. In Figure \ref{fig:uniform} we compare the random Gaussian noise with the uniform one as described, and indeed, the curves overlap.

In Appendix \ref{appendix:benoulliNoise} we further experiment with a different form of SR noise, leading to similar performance in terms of MSE, while reducing the computations performed.

\begin{figure}[t]
	\includegraphics{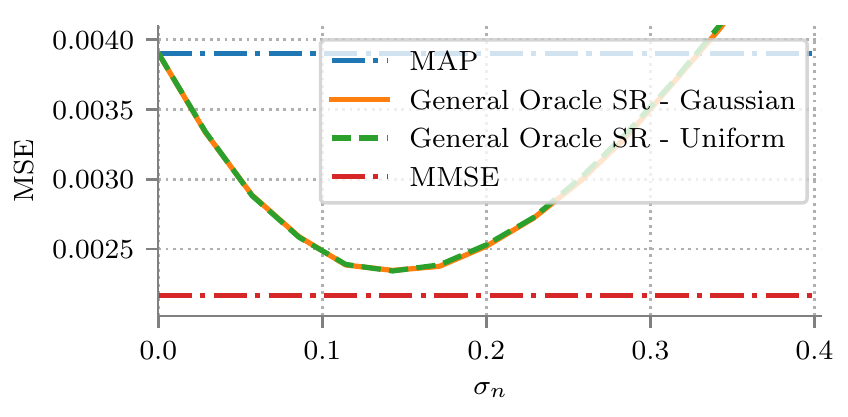}
	\caption{Uniform vs. Gaussian SR noise. $\Dv \in \mathbb{R}^{50 \times 100}$, ${\nuv\sim\mathcal{N}\left(\0,\sigma_{\nuv}^{2}I\right)}$, $\sigma_{\nuv}=0.2$, $\left|\left|\alphav\right|\right|_{0}=1$ and $\alphav_{s}\sim\mathcal{N}\left(0,1\right)$. 100 iterations of Algorithm \ref{algorithm:SR} with the MAP estimator as a pursuit. }
	\label{fig:uniform}
\end{figure}


\section{Image Denoising}
\label{sec:image}
In this section we demonstrate the benefits of using Algorithm \ref{algorithm:SR} in image denoising. We use the Trainlets \cite{sulam2016trainlets} dictionary trained on facial images from the Chinese Passport dataset as described in \cite{sulam2016large}. In the dataset, each image is of size $100 \times 100$ pixels and contains a gray-scale aligned face. The application we address is denoising, that is approximating the following optimization problem:
\begin{equation}
\hat{\alphav} = \argmin_{\alphav} \|\Dv\alphav - \y\|_2 \quad \text{s.t.} \quad \|\alphav\|_0=L.
\end{equation}
Since this problem is intractable, an approximation is achieved using the Subspace Pursuit (SP) algorithm \cite{dai2009subspace}, which provides a fast converging algorithm for a fixed number of non-zeros $L$. The particular choice of using SP follows from the prohibitive cardinality and dimensions for, rendering the OMP as a highly non-efficient alternative.

In our experiment, we corrupt an unseen image from the dataset with additive white Gaussian noise, using various standard deviation $\sigma_{\nu}$ values. Then, we denoise the image using SP, where the number of non-zeros $L$ is empirically set to maximize the denoising performance. We then apply Algorithm \ref{algorithm:SR} in its LS form, using 200 iterations and the same SP settings. Note that we do not seek state of the art denoising results but rather to show that our method can be easily applied to improve real image processing tasks.

In Figure \ref{fig:srImage} the results for $\sigma_{\nu}=40$ can be seen. Importantly, the SR result yield a clearer image with much less artifacts. Figure \ref{fig:srSigman} presents the effectiveness of SR under varying SR noise $\sigma_n$. We see that a gain of almost 2 dB is achieved by using SR with a proper $\sigma_n$ over the regular pursuit. Figure \ref{fig:srSigmav} presents a comparison of the PSNR values obtained by SP and Algorithm \ref{algorithm:SR} with SP for varying values of standard deviation $\sigma_{\nu}$. In all of the described experiments, Algorithm \ref{algorithm:SR} improved the denoising results. Generally we observe that as the noise increases, the improvement becomes more significant.

\begin{figure}[t]
	\centering
	\begin{subfigure}[t]{0.2\textwidth}
	\includegraphics[width=1\textwidth,trim={270 370 270 347},clip]{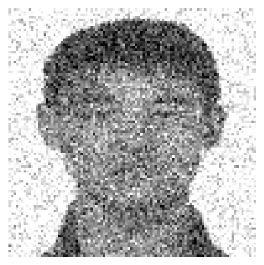}
	\caption{\centering{Noisy image. PSNR=16.1 dB.}}
	\end{subfigure}
	\begin{subfigure}[t]{0.2\textwidth}
	\includegraphics[width=1\textwidth,trim={270 370 270 347},clip]{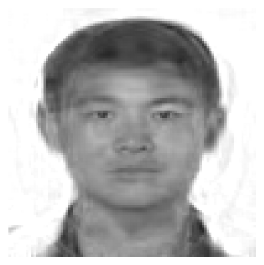}
	\caption{\centering{Subspace Pursuit. PSNR=26.88 dB.}}
	\end{subfigure}\\
	\begin{subfigure}[t]{0.2\textwidth}
	\includegraphics[width=1\textwidth,trim={270 370 270 347},clip]{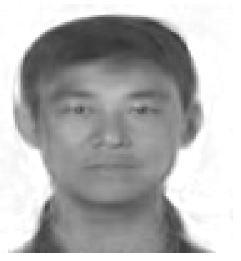}
	\caption{\centering{Stochastic Resonance. PSNR=28.76 dB.}}
	\end{subfigure}
	\begin{subfigure}[t]{0.2\textwidth}
	\includegraphics[width=1\textwidth,trim={270 370 270 347},clip]{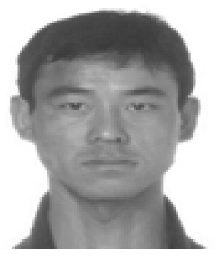}
	\caption{\centering{Clean Image}}
	\end{subfigure}
	\caption{Denoising results comparison. $\sigma_{\nu}=40$, $L=90$.}
	\label{fig:srImage}
\end{figure}

\begin{figure}
	\includegraphics{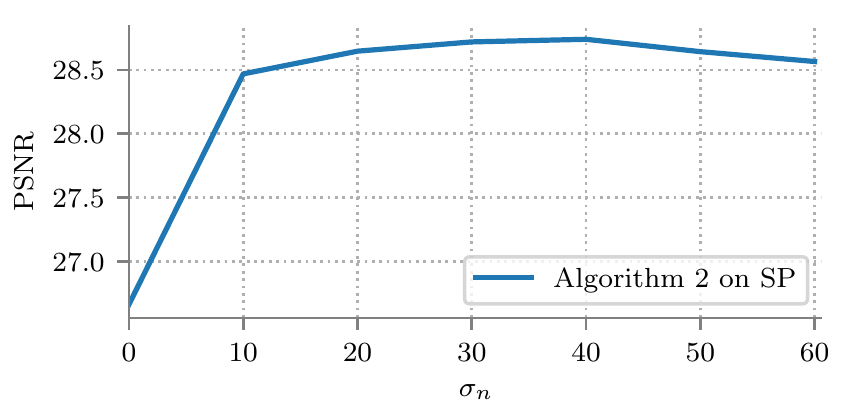}
	\caption{SR results with varying $\sigma_n$ for a noisy image with $\sigma_{\nu}=40$, PSNR=16.1 dB. $\sigma_n=0$ effectively does not use SR.}
	\label{fig:srSigman}
\end{figure}

\begin{figure}[t]
	\includegraphics{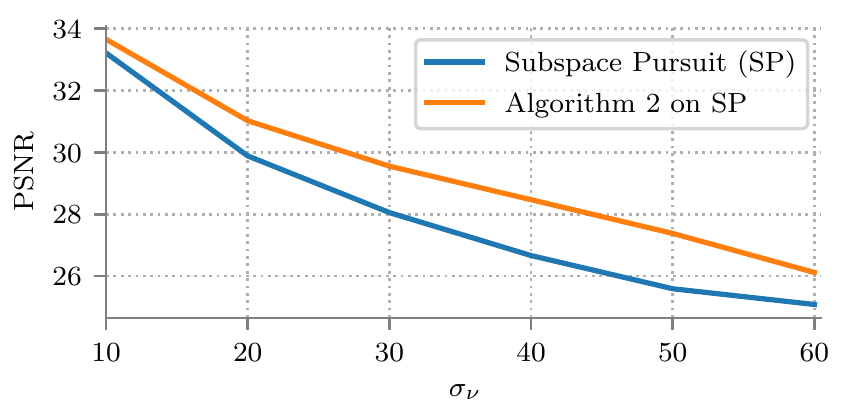}
	\caption{SP and SP + Algorithm \ref{algorithm:SR}: Comparison for varying standard deviation values $\sigma_v$.}
	\label{fig:srSigmav}
\end{figure}

\section{Conclusion}
\label{sec:conclusion}
In this work we suggested two algorithms leveraging the idea of stochastic resonance under the context of sparse coding. We analyzed their theoretical properties and showed that they enable to efficiently deploy arbitrary pursuit algorithms while boosting their performance with SR, providing an approximation to the MMSE estimator. While the first method we suggested is provably convergent to the MMSE, it is not directly applicable in cases where the prior of the sparse vectors is not known. This brought us to introduce a relaxed and more practical alternative. We have analyzed the properties of this second path in several cases and demonstrated its superiority over standard pursuit algorithms in both synthetic cases and on a natural image denoising task. In contrast to previous MMSE approximation methods, the ones suggested in this work have the ability to use any pursuit algorithm as a ``black box'', thus opening the door for MMSE approximation in large dimension regimes for the first time.


%

\appendices
\section{Unitary General SR Asymptotic Estimator}\label{appendix:unitary}
Placing the normal distribution function into \eqref{eq:qIntegral}, we obtain:
\begin{align}
\mathbb{E}_{n}&\left[\mathcal{H}_{\lambda}^{-}\left(\beta,n\right)\right]=\int_{-\infty}^{\infty}\mathcal{H}^{-}_{\lambda}\left(\beta+n\right)p\left(n\right)dn \\
=&\int_{\left|\beta+n\right|\geq\lambda}c^{2}\beta p\left(n\right)dn \\
=&c^2\left[\int_{-\infty}^{-\lambda-\beta}\beta p\left(n\right)dn + \int_{\lambda-\beta}^{\infty}\beta p\left(n\right)dn\right]
\\
=&c^2\beta\left[\int_{-\infty}^{-\lambda-\beta}\frac{1}{\sqrt{2\pi\sigma_n^2}}e^{-\frac{n^2}{2\sigma_n^2}}dn + \int_{\lambda-\beta}^{\infty}\frac{1}{\sqrt{2\pi\sigma_n^2}}e^{-\frac{n^2}{2\sigma_n^2}}\right]
\\
=&c^{2}\beta\left[Q\left(\frac{\lambda+\beta}{\sigma_n}\right)+Q\left(\frac{\lambda-\beta}{\sigma_n}\right)\right].
\end{align}
\section{The Sure Surface for the Unitary Case}
\label{appendix:sure}
Plugging the subtractive hard thresholding $\mathcal{H}_{\lambda}^-$ into \eqref{eq:sureUnitary} leads to the following expression:
\begin{align}
\mathbb{E}_n&\left[\mu\left(\mathcal{H}_{\lambda}^-\right)\right]=\sum_i\left(c^{2}\beta_i\left[Q\left(\frac{\lambda+\beta_i}{\sigma_n}\right)+Q\left(\frac{\lambda-\beta_i}{\sigma_n}\right)\right]\right)^2
\\
&-\sum_i2c^2\beta_i^2\left[Q\left(\frac{\lambda+\beta_i}{\sigma_n}\right)+Q\left(\frac{\lambda-\beta_i}{\sigma_n}\right)\right]
\\
&+\sum_i2\sigma_\nu^2c^2\left[Q\left(\frac{\lambda+\beta_i}{\sigma_n}\right)+Q\left(\frac{\lambda-\beta_i}{\sigma_n}\right)\right]
\\
&+\sum_i2\sigma_\nu^2c^2\beta_i\left[\frac{1}{\sqrt{2\pi}\sigma_n}e^{-\frac{\left(\lambda-\beta_i\right)^2}{2\sigma_n^2}} - \frac{1}{\sqrt{2\pi}\sigma_n}e^{-\frac{\left(\lambda+\beta_i\right)^2}{2\sigma_n^2}}\right].
\end{align}
In order to show that it is indeed easy to optimize $\lambda$ and $\sigma$ on the SURE surface, we demonstrate it by the following experiment. We generate sparse vectors with probability of $P_i=0.01$ for any coefficient to be non-zero. The coefficients of the non-zero entries are drawn from a Gaussian distribution $\mathcal{N}\left(0,1\right)$. We then generate signals using a unitary dictionary and add random Gaussian noise $\mathcal{N}\left(0,0.2^2\right)$. We then compute $\mu\left(\mathcal{H}_{\lambda}^-\right)$ for various $\lambda$ and $\sigma_n$ values. Figure \ref{fig:sure} presents the surface for these values, and Figure \ref{fig:sure_mse} shows the MSE results respectively. We can see that the SURE surface behaves just like the true MSE up to an additive constant and that it is smooth and rather easy to optimize. In terms of MSE, we see the superiority of the proposed estimator over the MAP estimator, and that it is quite close to the MMSE.
\begin{figure*}[t]
    \begin{subfigure}[t]{0.5\textwidth}
        \includegraphics[width=1\textwidth,trim={0 0 0 20},clip]{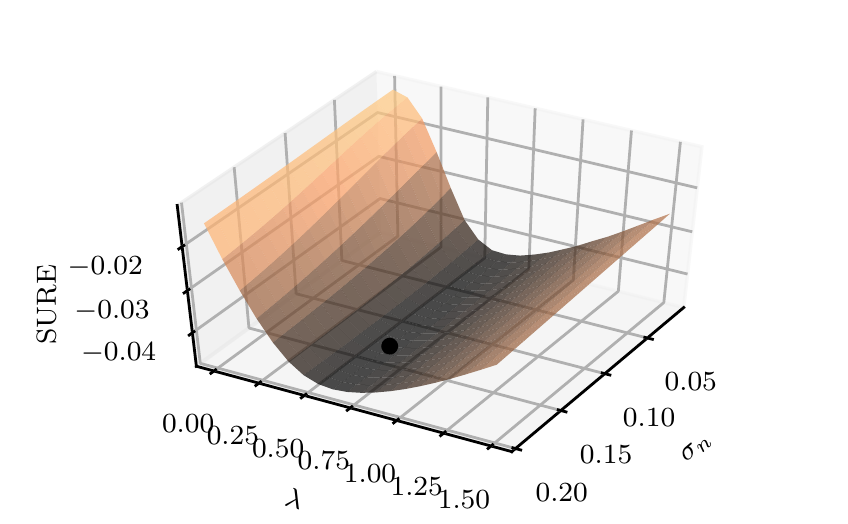}
        \caption{SURE Surface. The minimum is located at $\bullet$}
    \end{subfigure}
    \hfill
    \begin{subfigure}[t]{0.5\textwidth}
 		\includegraphics[width=1\textwidth,trim={0 0 0 20},clip]{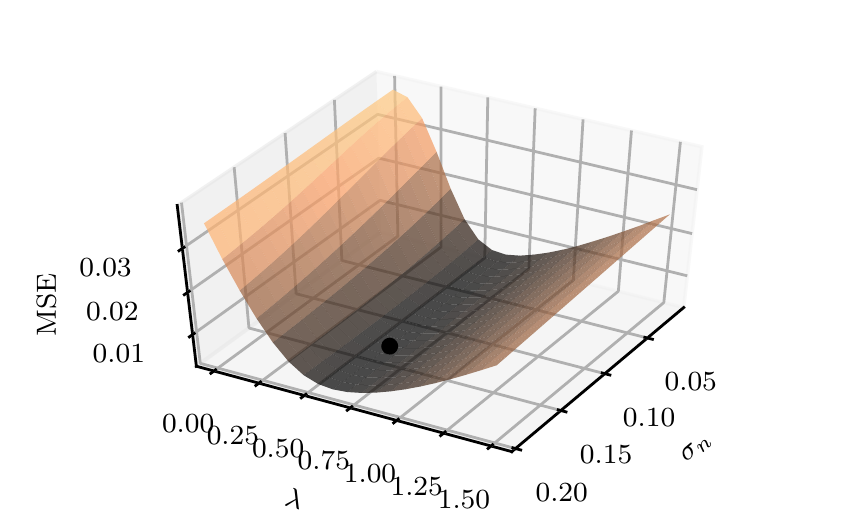}
        \caption{MSE Surface. The minimum is located at $\bullet$}
    \end{subfigure}
    \caption{SURE and MSE values for a unitary dictionary with varying $\lambda$ and $\sigma_n$.}
    \label{fig:sure}
    \begin{subfigure}[t]{0.5\textwidth}
		\includegraphics[width=1\textwidth,trim={0 7 0 5},clip]{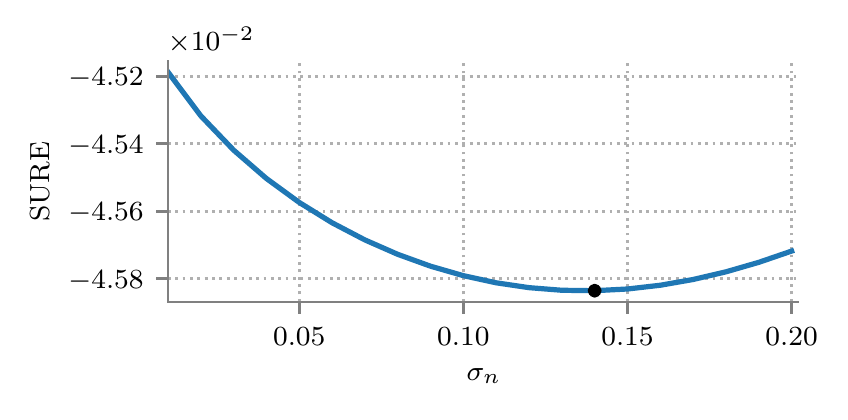}
        \caption{SURE curve for the optimal $\lambda$.}
    \end{subfigure}
    \hfill
    \begin{subfigure}[t]{0.5\textwidth}
        \includegraphics[width=1\textwidth,trim={0 10 0 5},clip]{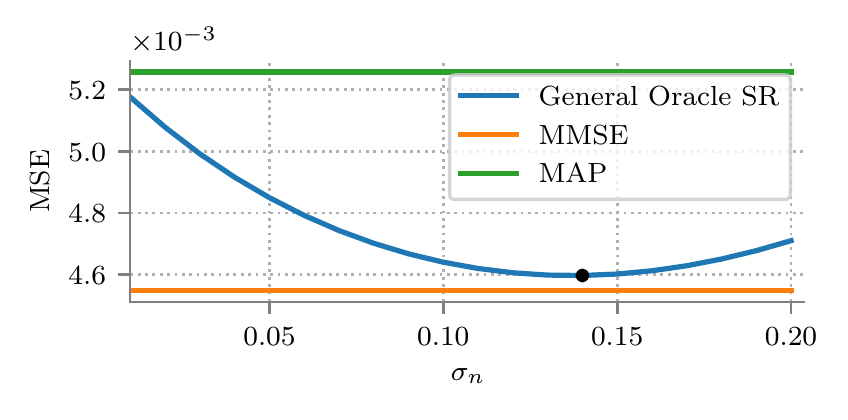}
        \caption{MSE curve for the optimal $\lambda$ extracted using SURE.}
    \end{subfigure}
    \caption{SURE and MSE values for the optimal $\lambda$, extracted from SURE. SURE's optimal $\sigma_n$ marked in $\bullet$.}
    \label{fig:sure_mse}
\end{figure*}

\section{Multiplicative Bernoulli SR Noise}
\label{appendix:benoulliNoise}
In this section we seek for an SR distribution from which we can benefit more than others in terms of computational efficiency. To do so, consider the following. Given the signal $\y$, we define the subsampling noise $\n_{\text{subsample}}$ in the following way:
\begin{equation}
n_i=\begin{cases}
1 & \text{w.p.} \quad p \\
0 & \text{w.p.} \quad 1-p
\end{cases},
\end{equation}
and the SR samples will now follow the following distribution:
\begin{equation}
y_i \cdot n_i = \begin{cases}
y_i & \text{w.p.} \quad p \\
0 & \text{w.p.} \quad 1-p
\end{cases}.
\end{equation}
Therefore, for an input signal of size $n$, only $pn$ samples will remain on average. This distribution is interesting because of the following reason. When zeroing out an element in the vector $\y$, the matching row in the dictionary $\Dv$ will always be multiplied by the zero element when calculating the correlations $\Dv^T\y_{\text{SR}}$ as done in most pursuits. This multiplication obviously has no contribution to the inner product and we might as well omit the zero elements from $\y_{\text{SR}}$ and the corresponding rows from $\Dv$, leading to a subsampled version of the signal $\y$ and the dictionary $\Dv$. In other words, in each of the SR iterations we simply subsample random elements with probability $p$ from the signal $\y$ and the matching rows from the dictionary $\Dv$, which leads to $\y_{\text{subsample}}$ of size ${pn \times 1}$ (on average) and a dictionary $\Dv_{\text{subsample}}$ of size ${pn \times m}$ (on average). Finally we sparse code the subsampled vectors. Just like in previous cases, we use the pursuit's result only as a support estimator in order to compute the oracle estimator. Hence, we then revert to the full sized signal $\y$ and dictionary $\Dv$ and compute either oracle or LS. 

Note that when using the Bernoulli noise, each pursuit has a computational benefit over the previously presented additive SR noise due to the decreased size of the signal's dimension. In Figure \ref{fig:subsampling} we show the results of the multiplicative Bernoulli noise compared to Gaussian additive noise. In this figure the $x$ axis represents the probability $p$ of the Bernoulli noise. We see that the two noise distributions lead to similar MSE results for the optimal choice of $\sigma_n$ and $p$, while using multiplicative Bernoulli SR noise is computationally efficient compared to additive Gaussian SR noise.

\begin{figure}[t]
	\includegraphics{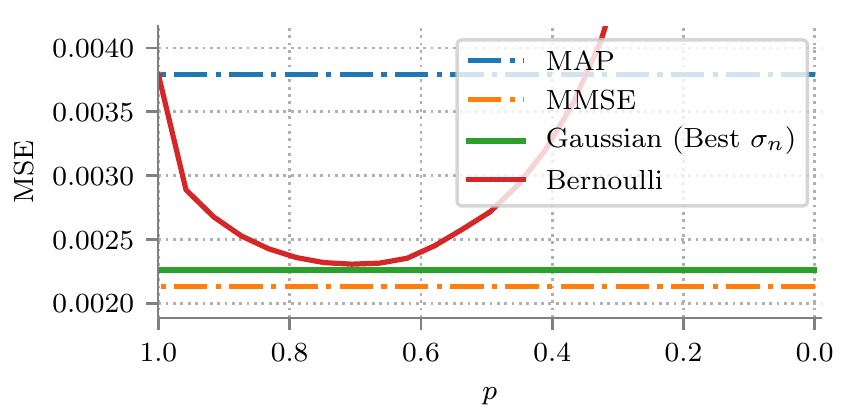}
	\caption{Multiplicative Bernoulli SR noise vs. additive Gaussian SR noise with 100 iterations of Algorithm \ref{algorithm:SR_importance}. ${\Dv \in \mathbb{R}^{50 \times 100}}$, $\nuv\sim\mathcal{N}\left(\0,\sigma_{\nuv}^{2}I\right)$, $\sigma_{\nuv}=0.2$, $\left|\left|\alphav\right|\right|_{0}=1$ and $\alpha_{S}\sim\mathcal{N}\left(0,1\right)$.}
	\label{fig:subsampling}
\end{figure}

\section*{Acknowledgment}
The authors thank Prof. Pramod K. Varshney, for his inspiring keynote talk at ICASSP 2015 in Brisbane, which inspired the initial ideas of this paper.
The research leading to these results has received funding
from the European Research Council under European Unions
Seventh Framework Programme, ERC Grant agreement no.
320649 and the Israel Science Foundation (ISF) Grant no. 335/18.

\ifCLASSOPTIONcaptionsoff
  \newpage
\fi



%
\bibliographystyle{ieeetr}
\bibliography{cites.bib}

%








\end{document}